\newcommand{\alglinelabel}{%
  \addtocounter{ALC@line}{-1}
  \refstepcounter{ALC@line}
  \label
}
\newcommand{\eps}{\ensuremath{\varepsilon}}
\newcommand{\calA}{\ensuremath{\mathcal{A}}}
\newcommand{\calX}{\ensuremath{\mathcal{X}}}
\theoremstyle{plain}
\newtheorem{theorem}{Theorem}[section]
\newtheorem{lemma}[theorem]{Lemma}
\theoremstyle{definition}
\newtheorem{definition}[theorem]{Definition}
\theoremstyle{remark}
\newcommand{\tm}{\tilde m}
\newcommand{\tu}{\tilde U}
\DeclareMathOperator*{\argmax}{arg\,max}
\DeclareMathOperator*{\argmin}{arg\,min}
\newcommand{\indic}[1]{\mathbbm{1}_{#1}}
\renewcommand{\P}[2]{\mathbb{P}_{#1}\left[#2\right]}
\newcommand{\ellk}{\ell_{\textsf{k-rel}}}
\newcommand{\topk}{\textsc{top-k}}
\newcommand{\emech}{\textsc{EM}}
\newcommand{\joint}{\textsc{Joint}}
\newcommand{\cpeel}{\textsc{CDP-Peel}}
\newcommand{\ppeel}{\textsc{PNF-Peel}}
\newcommand{\gum}[1]{\textsc{Gumbel}\left(#1\right)}
\newcommand{\expo}[1]{\textsc{Expo}\left(#1\right)}
\newcommand{\rnmexpo}{\textsc{RnmExpo}}
\newcommand{\pnfjoint}{\textsc{PNFJoint}}
\newcommand{\maxexpo}{\textsc{MaxExpo}}
\renewcommand{\epsilon}{\varepsilon}
\definecolor{emerald}{RGB}{0,153,123}
\newcommand{\arxiv}[1]{#1}
\newcommand{\narxiv}[1]{}
\renewenvironment{abstract}
 {\small
  \begin{center}
  \bfseries \abstractname\vspace{-.5em}\vspace{0pt}
  \end{center}
  \list{}{
    \setlength{\leftmargin}{6mm} 
    \setlength{\rightmargin}{\leftmargin}
  }
  \item\relax}
 {\endlist}
\begin{document}

\title{A Joint Exponential Mechanism For Differentially Private Top-$k$}
\author{Jennifer Gillenwater\thanks{Google New York, jengi@google.com}\and Matthew Joseph\thanks{Google New York, mtjoseph@google.com} \and Andrés Muñoz Medina\thanks{Google New York, ammedina@google.com} \and Mónica Ribero\thanks{UT Austin. Part of this work done while an intern at Google New York.}}
\maketitle

\begin{abstract}
    We present a differentially private algorithm for releasing the sequence of $k$ elements with the highest counts from a data domain of $d$ elements. The algorithm is a ``joint'' instance of the exponential mechanism, and its output space consists of all $O(d^k)$ length-$k$ sequences. Our main contribution is a method to sample this exponential mechanism in time $O(dk\log(k) + d\log(d))$ and space $O(dk)$. Experiments show that this approach outperforms existing pure differential privacy methods and improves upon even approximate differential privacy methods for moderate $k$.
\end{abstract}
\section{Introduction}
Top-$k$ is the problem of identifying the ordered sequence of $k$ items with the highest counts from a data domain of $d$ items. This basic problem arises in machine learning tasks such as recommender systems, basket market analysis, and language learning. To solve these problems while guaranteeing privacy to the individuals contributing data, several works have studied top-$k$ under the additional constraint of differential privacy (DP)~\cite{DMNS06}. Differential privacy guarantees that publishing the $k$ identified elements reveals only a controlled amount of information about the users who contributed data to the item counts.

The best known DP algorithm for top-$k$ is the ``peeling'' mechanism~\cite{BLST10, DR19}, which applies a DP subroutine for selecting the highest count item from a set, removes it, and repeats $k$ times. One possible such subroutine is the exponential mechanism, a general DP algorithm for choosing high-utility (here, high-count) elements from a data universe given some utility function (see \cref{sec:prelims}). Another similar subroutine is the permute-and-flip mechanism~\cite{MS20}. This leads to the baseline peeling mechanisms in our experiments, each using the best known composition methods: the approximate DP baseline uses the exponential mechanism analyzed via concentrated differential privacy (CDP) composition~\cite{DR16, BS16}, and the pure DP basline uses the permute-and-flip mechanism with basic composition. The approximate DP variant takes time $O(d + k\log(k))$ and the pure variant takes time $O(dk)$.  Both require space $O(d)$.

\subsection{Our Contributions}
\label{subsec:contributions}
We construct an instance of the exponential mechanism that chooses directly from sequences of $k$ items. Unlike the peeling mechanism, this approach does not use composition.  Past work has used this style of ``joint'' exponential mechanism to privately and efficiently estimate: 1-way marginals under $\ell_\infty$ error~\cite{SU15}, password frequency lists~\cite{BDB16}, and quantiles~\cite{GJK21}. However, it is not obvious how to extend any of these to top-$k$ selection.

Naive implementation of a joint exponential mechanism, whose output space is all sequences of $k$ items, requires enumerating all $O(d^k)$ such sequences.  This is impractical even for modest values of $d$ and $k$. As with previous work on joint exponential mechanisms, our main contribution is an equivalent efficient sampling method.

\begin{theorem}[Informal version of \cref{thm:main}]
    There is a joint exponential mechanism for $\eps$-DP top-$k$ that takes time $O(dk\log(k) + d\log(d))$ and space $O(dk)$.
\end{theorem}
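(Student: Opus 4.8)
The plan is to choose the utility function so that the exponential mechanism over length-$k$ sequences of distinct items is simultaneously $\eps$-DP and efficiently samplable, and then to give the sampler. Sort the item counts as $c_{(1)}\ge c_{(2)}\ge\cdots\ge c_{(d)}$ (set $c_{(d+1)}=-\infty$ if $k=d$), and for a sequence $S=(s_1,\dots,s_k)$ of distinct items take as utility, for instance, $u(S)=\min_{j\in[k]}\bigl(c_{s_j}-c_{(j+1)}\bigr)$. Changing one user's record moves a single count by $1$; since sorting is $1$-Lipschitz in $\ell_\infty$, every coordinate of the sorted count vector also moves by at most $1$, so each term $c_{s_j}-c_{(j+1)}$ — and hence $u(S)$ — changes by at most $2$. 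The sensitivity is thus a constant $\Delta\le 2$, and sampling $S$ with probability $\propto\exp\!\bigl(\tfrac{\eps}{2\Delta}\,u(S)\bigr)$ is $\eps$-DP by the standard exponential mechanism analysis (\cref{sec:prelims}). (One also checks that the true top-$k$ sequence maximizes $u$, so the mechanism is a sensible one; this is not needed for the theorem.) It remains only to implement this sampling in time $O(dk\log k+d\log d)$ and space $O(dk)$.

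The crux is that the super-level sets of $u$ have a product form. For a threshold $v$ put $A_j(v)=\{a:\ c_a\ge v+c_{(j+1)}\}$, so that $u(S)\ge v$ iff the $s_j$ are distinct and $s_j\in A_j(v)$ for all $j$. Because $c_{(2)}\ge c_{(3)}\ge\cdots$, the thresholds $v+c_{(j+1)}$ are non-increasing in $j$, so the sets are \emph{nested}, $A_1(v)\subseteq A_2(v)\subseteq\cdots\subseteq A_k(v)$. Hence the number of sequences at level $\ge v$ has the closed form
\[
  N_{\ge}(v)\;=\;\prod_{j=1}^{k}\max\!\bigl(0,\ |A_j(v)|-(j-1)\bigr),
\]
since, once $s_1,\dots,s_{j-1}\in A_{j-1}(v)\subseteq A_j(v)$ are chosen, exactly $|A_j(v)|-(j-1)$ choices remain for $s_j$; the same fact lets us draw a uniform element of $\{S: u(S)\ge v\}$ by $k$ sequential uniform choices. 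Moreover every value $u(S)$ equals $c_{(i)}-c_{(j+1)}$ for some $i,j$, so only $O(dk)$ thresholds are ever relevant, and as $v$ grows each $|A_j(v)|$ decreases through $O(d)$ breakpoints, every breakpoint altering exactly one factor of the product.

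The sampler has three stages. \emph{(i)} Sort the counts, in time $O(d\log d)$. \emph{(ii)} Sweep $v$ upward while maintaining $N_{\ge}(v)$ incrementally: the breakpoints of factor $j$ are $\{c_{(i)}-c_{(j+1)}:\ j\le i\le d\}$, which for each fixed $j$ already appear in sorted order given the sorted counts, so a $k$-way merge through a binary heap processes all $\sum_{j}(d-j+1)=O(dk)$ breakpoints in time $O(dk\log k)$, updating one factor in $O(1)$ at each step. This yields, for every relevant value $v$, the total mass $w(v)=\bigl(N_{\ge}(v)-N_{\ge}(v+1)\bigr)\exp\!\bigl(\tfrac{\eps}{2\Delta}v\bigr)$ of sequences with $u(S)=v$, from which we sample a level $v^\star$ with probability $\propto w(v^\star)$. \emph{(iii)} Draw $S$ uniformly from $\{S: u(S)=v^\star\}$ by partitioning this set according to the first position $j^\star$ at which $c_{s_j}-c_{(j+1)}$ equals $v^\star$: the coordinates $j<j^\star$ must land in $A_j(v^\star+1)$ (``slack''), the coordinate $j^\star$ in $A_{j^\star}(v^\star)\setminus A_{j^\star}(v^\star+1)$ (``tight''), and the coordinates $j>j^\star$ in $A_j(v^\star)$. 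Because all of $A_1(v^\star+1)\subseteq\cdots\subseteq A_{j^\star}(v^\star+1)\subseteq A_{j^\star}(v^\star)\subseteq\cdots\subseteq A_k(v^\star)$ form one nested chain, each class has size a product of the same shape as $N_{\ge}$, computable for all $j^\star$ in $O(k)$ time by prefix/suffix products (after $O(k\log d)$ binary searches to read off the $|A_j(\cdot)|$); so we sample $j^\star$ proportionally and then fill the slack prefix, the tight coordinate, and the free suffix by independent uniform draws, in $O(k\log d)$ time using a rank structure over the top items. All structures (the $k$ breakpoint lists, the running product, the rank structure) fit in $O(dk)$ space, for a total of $O(dk\log k+d\log d)$ time and $O(dk)$ space.

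The privacy bound and the closed form for $N_{\ge}(v)$ are routine once the nested structure is isolated; the real work is the sampler, and the two delicate points are exactly the ones above. First, the threshold sweep must be organized to run in $O(dk\log k)$: re-evaluating the $k$-term product at each of the $O(dk)$ breakpoints would cost $O(dk^2)$, and sorting all breakpoints together would cost $O(dk\log(dk))=O(dk\log d)$, so one genuinely needs the observation that the breakpoints come pre-sorted per position and can be merged with a $k$-element heap. Second, sampling \emph{uniformly within a single level} $v^\star$ without rejection requires the ``first tight position'' decomposition together with the nestedness argument that the number of completions of a partial sequence of each type depends only on how many coordinates remain — this is what turns the final draw into an $O(k\log d)$ computation rather than an enumeration of $\Omega(d^k)$ sequences. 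The one remaining nuisance, that the masses $w(v)$ span many orders of magnitude, is handled in the usual way by carrying $\log w(v)$ and subtracting off the maximum before exponentiating and normalizing.
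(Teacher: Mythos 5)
Your proposal is correct and follows the same overall architecture as the paper's sampler: exploit the fact that the utility takes only $O(dk)$ values, express the number of sequences at each level as a product over positions using the nested/monotone structure of the per-position feasible sets (your $|A_j(v)|-(j-1)$ is exactly the paper's $n_r = t_r(\tu_{ij})-(r-1)$ from \cref{lem:counting_m}), maintain that product across the $O(dk)$ breakpoints via a $k$-way heap merge to get $O(dk\log k)$ (the paper's \cref{lem:compute_ts}), and then sample a level followed by a uniform sequence within the level. Two details differ. First, your utility $\min_j(c_{s_j}-c_{(j+1)})$ is the paper's $u^*=\min_j(c_{s_j}-c_{(j)})$ shifted by one index; both are maximized by the true top-$k$ and have constant sensitivity, though your bound of $\Delta\le 2$ is loose (under this paper's neighboring relation a user may touch \emph{all} $d$ counts, not ``a single count'' as you state, but all changes are in the same direction, so each term and hence the min moves by at most $1$; the paper's \cref{lem:sensitivity} gets $\Delta=1$, which buys a factor of $2$ in utility). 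Second, and more substantively, you handle ties between utility levels directly: you work with super-level-set counts $N_{\ge}(v)$, take differences to get exact-level masses, and sample uniformly within a level via the ``first tight position'' decomposition. The paper instead perturbs each entry by a distinct $z_{ij}\in(0,1/2]$ so that all $dk$ levels become distinct (\cref{lem:U}), forces $s_i=j$ for the sampled entry, and reassembles the integer-level distribution via \cref{lem:ms}. Your route avoids the perturbation bookkeeping at the cost of the prefix/suffix-product argument for the within-level sampler; both are valid and yield the same output distribution and the same $O(dk\log k + d\log d)$ time and $O(dk)$ space.
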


While it is straightforward to prove a utility guarantee for this mechanism (\cref{thm:main_utility}), our main argument for this joint approach is empirical, as asymptotic guarantees often obscure markedly different performance in practice. Experiments show that the joint exponential mechanism offers the strongest performance among pure differential privacy mechanisms\arxiv{\footnote{It is also possible, with a small change to the proposed joint mechanism, to switch it from an instance of the exponential mechanism to an instance of the permute-and-flip mechanism~\cite{MS20}. This can theoretically improve utility by up to a factor of $2$.  However, the change appears to be negligible for typical datasets and raises numerical issues beyond small $k$.  See \cref{sec:pnf} for details.}} and even outperforms approximate differential privacy mechanisms when $k$ is not large (\cref{sec:experiments}).

\subsection{Related Work}
\label{subsec:related}
Private top-$k$ was first studied in the context of frequent itemset mining, where each user contributes a set of items, and the goal is to find common subsets~\cite{BLST10, LQSC12, ZNC12, LC14}. \citet{BLST10} introduced the first version of the peeling mechanism. Our main points of comparison will be variants of the peeling mechanism developed by \citet{DR19} and \citet{MS20}.

Past work has also studied several other algorithms for DP top-$k$. Laplace noise has been used for pure and approximate DP~\cite{DWZK19, QSZ21}. Additionally, for pure DP, the Gamma mechanism for releasing private counts (Theorem 4.1, \cite{SU15}) can be applied. Our experiments found that peeling mechanisms dominate these approaches, so we omit them, but implementations of all three appear in \narxiv{the supplementary material}\arxiv{our public code respository~\cite{Go22}}.  Finally, we note that \citet{DR19} study the problem of private top-$k$ when the number of items $d$ is prohibitively large and $O(d)$ runtime is impractical. We instead focus on the setting where $O(d)$ runtime is acceptable.

A few lower bounds are relevant to private top-$k$. One naive approach is to simply privately estimate all $d$ item counts and return the top $k$ from those noisy estimates. However, each count has expected error $\Omega(d/\eps)$ for pure DP (Theorem 1.1, \cite{HT10}). Similarly,~\citet{BUV14} (Corollary 3.4) construct a distribution over databases such that, over the randomness of the database and the mechanism, each count has expected error $\Omega(\sqrt{d}/\eps)$ for approximate DP. Top-$k$ approaches aim to replace this dependence on $d$ with a dependence on $k$.~\citet{BU17} and~\citet{SU17} prove lower bounds for what we call $k$-relative error, the maximum amount by which the true $k^{th}$ count exceeds any of the estimated top-$k$ counts (see \cref{sec:prelims} for details). Respectively, they prove $\Omega(k\log(d))$ and $\Omega(\sqrt{k}\log(d))$ sample complexity lower bounds for ``small'' and ``large'' $k$-relative error. In both cases, the peeling mechanism provides a tight upper bound. We upper bound signed maximum error (\cref{thm:main_utility}), but our paper more generally departs from these works by focusing on empirical performance.
\section{Preliminaries}
\label{sec:prelims}
\paragraph{Notation.} $[m] = \{1, \ldots, m \}$. Given a vector $v$ in dimension $d$ and indices $i<j$, $v_{i:j}$ denotes coordinates $i, \ldots, j$ of $v$; given a sequence $S$, $v_S$ denotes coordinates $v_i$ for $i \in S$. 

We start by formally describing the top-$k$ problem.

\begin{definition}
\label{def:top_k}
    Let $\calX$ be a data domain of $d$ items. In an instance of the \emph{top-$k$} problem, there are $n$ users and each user $i \in [n]$ has an associated vector $x_i \in \{0,1\}^d$. For dataset $D = \{x_i\}_{i=1}^n$, let $c_j = \sum_{i=1}^n x_{i,j}$ denote the count of item $j \in [d]$, and let $(c_1, \ldots, c_d)$ denote the counts in nonincreasing order. Given sequence of indices $S$, let $c_S$ denote the corresponding sequence of counts. Given sequence loss function $\ell$, the goal is to output a sequence $S$ of $k$ items $\topk(D) = \argmin_{S = (s_1, \ldots, s_k)} \ell(c_S)$.
\end{definition}

Note that each user contributes at most one to each item count but may contribute to arbitrarily many items. This captures many natural settings. For example, a user is unlikely to review the same movie more than once, but they are likely to review multiple movies. In general, we describe dataset $D$ by the vector of counts for its domain, $D = (c_1, \ldots, c_d)$.

Our experiments will use $\ell_\infty$, $\ell_1$, and, in keeping with past work on private top-$k$~\cite{BU17, DR19}, what we call $k$-relative error.

\begin{definition}
    \label{def:errors}
    Using the notation from \cref{def:top_k}, we consider sequence error functions
    \begin{enumerate}
    \setlength\itemsep{0.05em}
        \item $\ell_\infty(c_S) = \|c_{1:k} - c_S\|_\infty$,
        \item $\ell_1(c_S) = \|c_{1:k} - c_S\|_1$,  and
        \item \emph{$k$-relative error} $\ellk(c_S) = \max_{i \in [k]}(c_k - c_{s_i})$.
    \end{enumerate}
\end{definition}

The specific choice of error may be tailored to the data analyst's goals: $\ell_\infty$ error suits an analyst who wishes to minimize the worst error of any of the top $k$ counts; $\ell_1$ error is appropriate for an analyst who views a sequence of slightly inaccurate counts as equivalent to one highly inaccurate count; and $k$-relative error may be best when the analyst prioritizes a ``sound'' sequence where no count is much lower than the true $k^{th}$ count. Note that, while $k$-relative error has been featured in past theoretical results on private top-$k$ selection, it is the most lenient error metric. For example, given $d$ items with counts $100, 1, \ldots, 1$ and $k=2$, \emph{any} sequence of items obtains optimal $k$-relative error, and in general $\ellk(c_S) \leq \min(\ell_\infty(c_S), \ell_1(c_S))$.

Next, we cover privacy prerequisites. Differential privacy guarantees that adding or removing a single input data point can only change an algorithm's output distribution by a carefully controlled amount.

\begin{definition}[\citet{DMNS06}]
         Datasets $D, D' \in \calX^*$ are \emph{neighbors} (denoted $D \sim D'$) if $D'$ can be obtained from $D$ by adding or removing a data point $x$. Mechanism $M \colon \calX^* \to Y$ is \emph{$(\eps,\delta)$-differentially private} (DP) if, for any two neighboring datasets $D \sim D'$ in $\calX^*$, and any $S\subseteq Y$, it holds that $\P{}{M(D) \in S} \leq e^\epsilon \P{}{M(D') \in S} + \delta$. If $\delta=0$, it is $\eps$-DP.
\end{definition}

One especially flexible differentially private algorithm is the exponential mechanism. Given some utility function over outputs, the exponential mechanism samples high-utility outputs with higher probability than low-utility outputs.

\begin{definition}[\citet{MT07, DR14}]
\label{def:exp}
    Given utility function $u \colon \calX^* \times O \to \mathbb{R}$ with $\ell_1$ sensitivity $\Delta(u) = \max_{D \sim D', o \in O} |u(D, o) - u(D', o)|$, the exponential mechanism $M$ has output distribution
    \[
        \P{}{M(D) = o} \propto \exp\left(\frac{\eps u(D,o)}{2\Delta(u)}\right),
    \]
    where $\propto$ elides the normalization factor.
\end{definition}

Note that this distribution places relatively more mass on outputs with higher scores when the sensitivity $\Delta(u)$ is small and the privacy parameter $\eps$ is large.

\begin{lemma}[\citet{MT07}]
\label{lem:em}
    The exponential mechanism is $\eps$-DP.
\end{lemma}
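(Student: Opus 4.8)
The plan is to follow the standard likelihood-ratio argument for the exponential mechanism. Since the output space here is finite (all $O(d^k)$ length-$k$ sequences), it suffices to bound $\P{}{M(D) = o} / \P{}{M(D') = o}$ by $e^\eps$ for every individual output $o$ and every pair of neighbors $D \sim D'$; the bound for an arbitrary event $S \subseteq Y$ then follows by summing numerator and denominator over $o \in S$ and applying the elementary fact that if $a_o \le e^\eps b_o$ for all $o$ then $\sum_{o \in S} a_o \le e^\eps \sum_{o \in S} b_o$.

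First I would write both probabilities explicitly with their normalizing constants, $\P{}{M(D) = o} = \exp(\eps u(D,o)/(2\Delta(u))) \,/\, \sum_{o' \in O}\exp(\eps u(D,o')/(2\Delta(u)))$ and the analogous expression for $D'$, and then form the ratio. This ratio factors as a product of two pieces: a ``numerator piece'' $\exp\!\left(\eps (u(D,o) - u(D',o)) / (2\Delta(u))\right)$, and a ``partition-function piece'' equal to $\sum_{o'}\exp(\eps u(D',o')/(2\Delta(u)))$ divided by $\sum_{o'}\exp(\eps u(D,o')/(2\Delta(u)))$.

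Next I would bound each of the two pieces by $e^{\eps/2}$ using the sensitivity bound $|u(D,o) - u(D',o)| \le \Delta(u)$ from \cref{def:exp}. For the numerator piece this is immediate. For the partition-function piece I would apply the same inequality termwise — $\exp(\eps u(D',o')/(2\Delta(u))) \le e^{\eps/2}\exp(\eps u(D,o')/(2\Delta(u)))$ for each $o'$ — and then sum over $o'$ to factor $e^{\eps/2}$ out of the quotient of sums. Multiplying the two bounds yields the ratio at most $e^\eps$, which is precisely where the factor of $2$ in the mechanism's exponent is consumed: the data enters in two places (the score of the realized output and the normalizer), and each costs $\eps/2$.

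There is no real obstacle; the one thing worth being careful about is not to forget the second factor — it is tempting to bound only the change in the chosen output's score and overlook that the normalizing constant also shifts when $D$ is replaced by $D'$. (If one wanted the statement for a continuous output space, the sums would be replaced by integrals against a base measure and the termwise inequality becomes a pointwise inequality under the integral, but that generality is not needed here.)
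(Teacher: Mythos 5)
Your argument is correct and is exactly the standard likelihood-ratio proof of this result: the paper itself states \cref{lem:em} as a cited fact from \citet{MT07} without reproducing a proof, and your two-factor decomposition (score ratio plus partition-function ratio, each bounded by $e^{\eps/2}$ via the sensitivity bound, then summed over $o \in S$) is precisely the canonical argument that the citation points to. No gaps.
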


A tighter analysis is possible for certain utility functions.

\begin{definition}
\label{def:monotonic}
    A  utility function $u$ is \emph{monotonic} (in the dataset) if, for every dataset $D$ and output $o$, for any neighboring datasets $D'$ that results from adding some data point to $D$, $u(D, o) \leq u(D', o)$.
\end{definition}

When the utility function is monotonic, the factor of 2 in the exponential mechanism's output distribution can be removed. This is because the factor of 2 is only necessary when scores for different outputs move in opposite directions between neighboring datasets.

\begin{lemma}[\citet{MT07}]
    Given monotonic utility function $u$, the exponential mechanism is $\tfrac{\eps}{2}$-DP.
\end{lemma}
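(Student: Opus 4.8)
The plan is to bound the ratio $\P{}{M(D) = o}/\P{}{M(D') = o}$ for neighbors $D \sim D'$ directly, exploiting that a monotonic utility makes all of the unnormalized scores move in the same direction between neighbors.

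I would start by fixing neighbors $D \sim D'$ and an output $o \in O$, writing the unnormalized weight $w_D(o) = \exp\left(\frac{\eps u(D, o)}{2\Delta(u)}\right)$ and normalizer $Z_D = \sum_{o' \in O} w_D(o')$, so that $\P{}{M(D) = o} = w_D(o)/Z_D$ and
\[
    \frac{\P{}{M(D) = o}}{\P{}{M(D') = o}} = \underbrace{\frac{w_D(o)}{w_{D'}(o)}}_{(\mathrm{I})}\cdot\underbrace{\frac{Z_{D'}}{Z_D}}_{(\mathrm{II})}.
\]
The standard argument behind \cref{lem:em} bounds each of $(\mathrm{I})$ and $(\mathrm{II})$ by $e^{\eps/2}$ using only $|u(D, \cdot) - u(D', \cdot)| \leq \Delta(u)$, hence the $e^{\eps}$ factor. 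To get the improvement I would split on whether $D'$ adds or removes a point relative to $D$. If $D' = D \cup \{x\}$, monotonicity gives $u(D, o') \leq u(D', o')$ for all $o'$, so $w_D(o') \leq w_{D'}(o')$ termwise; this immediately forces $(\mathrm{I}) \leq 1$, while combining it with $u(D', o') - u(D, o') \leq \Delta(u)$ gives $w_{D'}(o') \leq e^{\eps/2} w_D(o')$ for every $o'$ and hence $(\mathrm{II}) \leq e^{\eps/2}$. In the reverse case $D = D' \cup \{x\}$ the inequalities flip, so $(\mathrm{II}) \leq 1$ while $(\mathrm{I}) = \exp\left(\frac{\eps(u(D,o) - u(D',o))}{2\Delta(u)}\right) \leq e^{\eps/2}$ since $0 \leq u(D,o) - u(D',o) \leq \Delta(u)$. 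Either way the product is at most $e^{\eps/2}$; summing (or integrating, in the continuous case) over any $S \subseteq O$ then yields $\P{}{M(D) \in S} \leq e^{\eps/2}\,\P{}{M(D') \in S}$, which is $\tfrac{\eps}{2}$-DP.

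There is no serious obstacle here, since the computation is short, but the one point I would be careful to articulate is \emph{why} monotonicity saves the factor of two. In the general case a single dataset change can simultaneously inflate the numerator weight $w_D(o)$ and deflate the normalizer $Z_D$, costing $e^{\eps/2}$ in each of $(\mathrm{I})$ and $(\mathrm{II})$; monotonicity forces these two factors to move in opposite directions, so at most one of them is ever larger than $1$. Recognizing and exploiting that coupling is the whole idea, and the remaining manipulations are routine.
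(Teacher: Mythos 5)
Your argument is correct and is the standard proof of this result; the paper itself does not prove the lemma (it cites \citet{MT07} and offers only the one-line intuition that the factor of 2 is needed only when scores for different outputs move in opposite directions between neighbors), and your case split on whether the neighbor adds or removes a point formalizes exactly that intuition, showing that monotonicity forces the weight ratio and the normalizer ratio to move in opposite directions so at most one of them exceeds $1$. No gaps.
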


One cost of the exponential mechanism's generality is that its definition provides no guidance for efficiently executing the sampling step. As subsequent sections demonstrate, this is sometimes the main technical hurdle to applying it.
\section{A Joint Exponential Mechanism for Top-$k$}
\label{sec:joint}
Our application of the exponential mechanism employs a utility function $u^*$ measuring the largest difference in counts between the true counts and candidate counts. Let $(c_1, \ldots, c_d)$ be the item counts in nonincreasing order. For candidate sequence of items $S = (s_1, \ldots, s_k)$, we define
\[
  u^*(D, S) =
  \begin{cases}
        -\max_{i \in [k]}(c_i - c_{s_i}) & \text{if $s_1, \ldots, s_k$} \\
                & \text{are distinct.}\\
        -\infty & \text{otherwise}
  \end{cases}
\]
$u^*$ thus assigns the highest possible score, 0, to the true sequence of top $k$ counts, and increasingly negative scores to sequences with smaller counts. Sequences with repeated items have score $-\infty$ and are never output.

\paragraph{Discussion of $u^*$.} A natural alternative to $u^*$ would replace $-\max_{i \in [k]}(c_i - c_{s_i})$ with $-\max_{i \in [k]}|c_i - c_{s_i}| = -\|c_i - c_{s_i}\|_\infty$. Call this alternative $u'$. In addition to being expressible as a simple norm, $u'$ also corresponds exactly to the number of user additions or removals sufficient to make $S$ the true top-$k$ sequence\footnote{The idea of a utility function based on dataset distances has appeared in the DP literature under several names~\cite{JS13, AD20a, MG20} but has not been applied to top-$k$ selection.}. However, $u^*$ has two key advantages over $u'$. First, $u^*$ admits an efficient sampling mechanism. Second, $u'$ favors sequences that omit high-count items entirely over sequences that include them in the wrong order. For example, suppose we have a dataset $D$ consisting of $d = 10$ items with counts $100, 90, \ldots, 10$.  If we want the top $k=5$ items, we will consider sequences such as $S_1 = (1, 3, 4, 5, 2)$ and $S_2 = (1, 3, 4, 5, 6)$.  These have identical value according to $u^*$: $u^*(D,S_1) = -10 = u^*(D,S_2)$.  But according to $u'$, $S_1$ scores much worse than $S_2$: $u'(D,S_1) = -30 < -10 = u'(D,S_2)$. This conflicts with the ultimate goal of identifying the highest-count items; $S_1$ contains item 2 (count $90$), while $S_2$ replaces it with item 6 (count $50$)\footnote{The standard $\ell_{\infty}$ loss metric shares this flaw; $\ell(c_S) = \max_{i \in [k]}(c_i - c_{s_i})$ may therefore be a reasonable loss metric for future top-$k$ work. Nonetheless, past work uses $\ell_\infty$ error, and we did not observe large differences between the two empirically, so we use $\ell_\infty$ in our experiments as well.}. We now show that $u^*$ also has low sensitivity.

\begin{lemma}
\label{lem:sensitivity}
    $\Delta(u^*) = 1$.
\end{lemma}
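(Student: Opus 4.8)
The plan is to bound, for any pair of neighboring datasets $D \sim D'$ and any fixed candidate sequence $S = (s_1,\dots,s_k)$, the quantity $|u^*(D,S) - u^*(D',S)|$ by $1$, and to exhibit a pair achieving equality. The case where $S$ has a repeated item is trivial: then $u^*(D,S) = u^*(D',S) = -\infty$, and we interpret the difference as $0$ (alternatively, one simply restricts attention to sequences of distinct items, since repeated-item sequences are never output and the sensitivity over the relevant part of the output space is all that matters). So assume $s_1,\dots,s_k$ are distinct, and recall that going from $D$ to a neighbor changes the count of each item by at most $1$, i.e.\ $|c_j - c_j'| \le 1$ for every item $j$, where we must be careful that $(c_1,\dots,c_d)$ denotes the counts \emph{sorted} in nonincreasing order — adding or removing a user can permute which item sits in which sorted position, but the sorted count sequences of $D$ and $D'$ still differ coordinatewise by at most $1$ (this is a standard fact: sorting is $1$-Lipschitz in $\ell_\infty$, so $\|(c_1,\dots,c_d) - (c_1',\dots,c_d')\|_\infty \le \|(\text{unsorted counts of }D) - (\text{unsorted counts of }D')\|_\infty \le 1$).

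The main step is then a short calculation on the $\max$. Write $f(D) = \max_{i\in[k]}(c_i - c_{s_i})$, so $u^*(D,S) = -f(D)$ and likewise for $D'$. For each $i \in [k]$ we have
\[
  (c_i - c_{s_i}) - (c_i' - c_{s_i}') = (c_i - c_i') + (c_{s_i}' - c_{s_i}),
\]
and both parenthesized terms lie in $[-1,1]$; however, a crude bound here only gives $2$, so the point is to use the structure more carefully. The right way is to note that $f$ is a maximum of functions each of the form $g_i(v) = v_i - v_{s_{\sigma(i)}}$ evaluated at the sorted count vector, and each such $g_i$ has the property that it is monotone in a way that lines up with sorting. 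Concretely: let $i^\star$ achieve the max for $D$, so $f(D) = c_{i^\star} - c_{s_{i^\star}}$. Then
\[
  f(D) - f(D') \le (c_{i^\star} - c_{s_{i^\star}}) - (c_{i^\star}' - c_{s_{i^\star}}') = (c_{i^\star} - c_{i^\star}') - (c_{s_{i^\star}} - c_{s_{i^\star}}').
\]
Here I would invoke the more refined observation: because the $c_i$ are the sorted counts, adding a user can only \emph{increase} (unsorted) counts, which after re-sorting means $c_i' \ge c_i$ for every position $i$ when $D'$ adds a point to $D$ — so $u^*$ is in fact monotonic in the sense of \cref{def:monotonic} once one checks it. That gives $c_{i^\star} - c_{i^\star}' \le 0$, and since the individual unsorted count $c_{s_{i^\star}}$ moves by at most $1$ we also need $c_{s_{i^\star}} - c_{s_{i^\star}}'$ controlled — but $s_{i^\star}$ is a fixed item index, not a sorted position, so $|c_{s_{i^\star}} - c_{s_{i^\star}}'| \le 1$ directly. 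Combining, $f(D) - f(D') \le 0 + 1 = 1$; by the symmetric argument (swapping the roles of $D$ and $D'$, or using monotonicity in the other direction), $|f(D) - f(D')| \le 1$, hence $|u^*(D,S) - u^*(D',S)| \le 1$.

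Finally I would check tightness with an explicit example: take $d = 2$, $k = 1$, $D$ with item counts $(1,0)$ and $D'$ obtained by removing the single user, giving counts $(0,0)$; for $S = (2)$ we get $u^*(D,S) = -(1-0) = -1$ while $u^*(D',S) = -(0-0) = 0$, so the difference is exactly $1$. Therefore $\Delta(u^*) = 1$. The one place to be careful — and what I'd flag as the main obstacle — is the bookkeeping around sorted versus unsorted indices: $c_i$ for $i$ a sorted \emph{position} behaves monotonically under user addition/removal, whereas $c_{s_i}$ for $s_i$ a fixed \emph{item} need only be $1$-Lipschitz; getting the bound down from the naive $2$ to the correct $1$ hinges on not double-counting, i.e.\ on recognizing that the position term and the item term cannot both swing by a full unit in opposite directions in a way that hurts, which is exactly the monotonicity-style cancellation used above.
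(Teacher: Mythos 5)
Your proposal is correct and follows essentially the same route as the paper's proof: both arguments hinge on the observation that adding (or removing) a user moves every sorted-position count $c_i$ and every fixed-item count $c_{s_i}$ in the \emph{same} direction by at most one, so each difference $c_i - c_{s_i}$ changes by at most one and hence so does the maximum, avoiding the naive bound of $2$. Your version is simply more explicit about the max manipulation and the sorted-versus-unsorted index bookkeeping, and adds a tightness example that the paper leaves implicit.
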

\begin{proof}
    First, any sequence with utility $-\infty$ has that utility on every dataset. Turning to sequences of distinct elements $(s_1, \ldots, s_k)$, adding a user does not decrease any count, and increases a count by at most one. Furthermore, while the top $k$ items may change, none of the top-$k$ counts decrease, and each increases by at most one. It follows that each $c_i - c_{s_i}$ either stays the same, decreases by one, or increases by one. A similar analysis holds when a user is removed.
\end{proof}

We call the instance of the exponential mechanism with utility $u^*$ \joint. Its privacy is immediate from \cref{lem:em}.

\begin{theorem}
\label{thm:main_privacy}
    \joint{} is $\eps$-DP.
\end{theorem}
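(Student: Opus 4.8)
The plan is to recognize that \joint{} is, by definition, just the exponential mechanism of \cref{def:exp} instantiated with the utility function $u^*$, so the theorem should be an essentially immediate consequence of \cref{lem:em}. The only work is to check that the mechanism is well-specified. First I would invoke \cref{lem:sensitivity} to get $\Delta(u^*) = 1$; this makes the exponent $\eps\, u^*(D,S) / (2\Delta(u^*))$ a ratio with finite positive denominator, so \joint{} really is an instance of the exponential mechanism and not a degenerate object.

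Next I would verify that the output distribution is a genuine probability distribution in spite of the $-\infty$ scores assigned to sequences with repeated items. Since $u^*(D, S^\star) = 0$ for the true top-$k$ sequence $S^\star$ (which has distinct entries), the normalizing sum $\sum_S \exp(\eps\, u^*(D,S)/2)$ is at least $1$; since $u^*(D,S) \le 0$ for all $S$, it is at most $d^k$. Hence the distribution is well-defined and, as $\exp(-\infty) = 0$, supported entirely on sequences of distinct items, matching the intent behind $u^*$. Given these two observations, \cref{lem:em} applies verbatim and yields that \joint{} is $\eps$-DP.

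I do not expect any real obstacle here: the single subtlety is the interaction between the infinite utility values and (i) the sensitivity computation and (ii) the normalization, and both are harmless because a sequence scoring $-\infty$ scores $-\infty$ on every dataset (as already used in the proof of \cref{lem:sensitivity}). One could additionally note that $u^*$ is \emph{not} monotonic in the sense of \cref{def:monotonic} --- adding a user who increments a top-count item while leaving some $c_{s_i}$ unchanged can strictly increase $c_i - c_{s_i}$ and hence strictly decrease $u^*(D,S)$ --- so the factor-$2$ version \cref{lem:em} is the correct tool and the $\eps$ (rather than $\eps/2$) bound is what we should claim.
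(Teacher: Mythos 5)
Your proposal is correct and matches the paper's argument, which simply observes that \joint{} is the exponential mechanism with utility $u^*$, has sensitivity $\Delta(u^*)=1$ by \cref{lem:sensitivity}, and is therefore $\eps$-DP by \cref{lem:em}. Your additional checks (well-definedness of the normalization despite the $-\infty$ scores, and the remark that $u^*$ is not monotonic so the factor-2 version is the right one) are correct but not needed beyond what the paper states.
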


A utility guarantee for \joint{} is also immediate from the generic utility guarantee for the exponential mechanism. The (short) proof appears in \cref{sec:utility-proof}.

\begin{restatable}{theorem}{MainUtility}
\label{thm:main_utility}
    Let $c_1, \ldots, c_k$ denote the true top-$k$ counts for dataset $D$, and let $\tilde c_1, \ldots, \tilde c_k$ denote those output by \joint. With probability at least $99/100$, 
    \[
        \max_{i \in [k]}(c_i - \tilde c_i) \leq \frac{2[k\ln(d) + 5]}{\eps}.
    \]
\end{restatable}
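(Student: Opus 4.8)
The plan is to instantiate the generic high-probability utility guarantee of the exponential mechanism for the specific mechanism \joint. Recall the standard bound: for an exponential mechanism with utility $u$, sensitivity $\Delta(u)$, output space $O$, and optimal output $o^*$ (achieving $u(D, o^*) = \max_o u(D,o) =: OPT$), one has
\[
  \Pr\left[u(D, M(D)) \leq OPT - \frac{2\Delta(u)}{\eps}\left(\ln|O| + t\right)\right] \leq e^{-t}.
\]
Here $u = u^*$, so by \cref{lem:sensitivity} we have $\Delta(u^*) = 1$. The optimal sequence is the true top-$k$ sequence, for which $u^*(D, S^*) = 0$, i.e.\ $OPT = 0$. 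The relevant output space is the set of distinct length-$k$ sequences (sequences with repeats have utility $-\infty$ and contribute nothing), which has size $d(d-1)\cdots(d-k+1) \leq d^k$, so $\ln|O| \leq k\ln(d)$.

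Next I would set the failure probability. We want the $99/100$ success probability, so choose $t$ with $e^{-t} = 1/100$, i.e.\ $t = \ln(100) < 5$. Plugging in gives: with probability at least $99/100$,
\[
  u^*(D, \widetilde S) \geq OPT - \frac{2}{\eps}\bigl(k\ln(d) + t\bigr) \geq -\frac{2}{\eps}\bigl(k\ln(d) + 5\bigr),
\]
where $\widetilde S = (\tilde s_1, \ldots, \tilde s_k)$ is the output sequence. Since the output always has distinct elements (probability of outputting a repeat sequence is zero), $u^*(D, \widetilde S) = -\max_{i\in[k]}(c_i - c_{\tilde s_i})$, so $\max_{i\in[k]}(c_i - c_{\tilde s_i}) \leq \frac{2[k\ln(d)+5]}{\eps}$. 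Finally, in the theorem statement $\tilde c_i$ denotes the $i$-th output count $c_{\tilde s_i}$, so this is exactly the claimed bound $\max_{i\in[k]}(c_i - \tilde c_i) \leq \frac{2[k\ln(d)+5]}{\eps}$.

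This proof is essentially a plug-in exercise, so there is no real obstacle; the only points requiring a moment of care are (i) correctly bounding the effective output space size by $d^k$ — using that the $-\infty$-utility sequences can be dropped from the normalization/union-bound without affecting the argument — and (ii) matching notation, namely that $OPT = 0$ because the true top-$k$ sequence is always a valid (distinct) candidate, and that $\tilde c_i = c_{\tilde s_i}$. One should also note $\ln(100) \approx 4.605 < 5$, which is why the constant $5$ (rather than $\ln 100$) appears in the bound. If the paper prefers to invoke the monotonic version of the exponential mechanism, one would need to check that $u^*$ is \emph{not} monotonic (counts of non-selected top items can go up while a selected count stays fixed, so the gap can increase on addition), which is presumably why the factor-$2$ version is used; I would state explicitly that we use the $\eps$-DP (factor-$2$) exponential mechanism here to justify the $2/\eps$ rather than $1/\eps$ leading constant.
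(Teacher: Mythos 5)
Your proof is correct and follows essentially the same route as the paper: both invoke the standard high-probability utility guarantee for the exponential mechanism with $\Delta(u^*)=1$, $OPT=0$, and $|R|\leq d^k$, then take $t=5$ (so $e^{-t}<1/100$). The extra remarks you add about dropping the $-\infty$-utility sequences and the non-monotonicity of $u^*$ are sensible sanity checks but not needed beyond what the paper does.
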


Naive sampling of \joint{} requires computing $O(d^k)$ output probabilities. The next subsection describes a sampling algorithm that only takes time $O(dk\log(k) + d\log(d))$.

\subsection{Efficiently Sampling \joint}
\label{subsec:sampling}
The key observation is that, while there are $O(d^k)$ possible output sequences, a given instance $u^*(D,\cdot)$ has only $dk$ possible values. This is because each score takes the form $-(c_i - c_j)$ for some $i \in [k]$ and $j \in [d]$. Our algorithm will therefore proceed as follows:

\begin{enumerate}
    \item For each of the $O(dk)$ utilities $U_{ij} = -(c_i - c_j)$, \textbf{count the number} $m(U_{ij})$ of sequences $S$ with score $u^*(S) = U_{ij}$. 
    \item \textbf{Sample a utility} $U_{ij}$ from the distribution defined by
    \begin{equation}
    \label{eq:utility_distribution}
        \P{}{U_{ij}} \propto m(U_{ij})\exp\left(\frac{\eps U_{ij}}{2}\right).
    \end{equation}
    \item From the space of all sequences that have the selected utility $U_{ij}$, \textbf{sample a sequence} uniformly at random.
\end{enumerate}

This outline makes one oversimplification: instead of counting the number of sequences for each of $O(dk)$ (possibly non-distinct) integral utility values, the actual sampling algorithm will instead work with exactly $dk$ distinct non-integral utility values. Nonetheless, the output distribution will be exactly that of the exponential mechanism described at the beginning of the section.

\subsubsection{Counting the Number of Sequences}
Define $k \times d$ matrix $\tu$ by $\tu_{ij} = -(c_i - c_j) - z_{ij}$ where $z_{ij}$ is a small term in $(0,1/2]$ that ensures distinctness, 
\[
    z_{ij} = \frac{d(k-i) + j}{2dk}.
\]
Several useful properties of $\tu$ are stated in \cref{lem:U}.

\begin{lemma}
\label{lem:U}
Given $\tu$ defined above, 1) each row of $\tu$ is decreasing, 2) each column of $\tu$ is increasing, and 3) the elements of $\tu$ are distinct.
\end{lemma}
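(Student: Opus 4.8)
The plan is to reduce all three claims to two elementary facts about the definition $\tu_{ij} = c_j - c_i - z_{ij}$: the counts $c_1 \geq c_2 \geq \cdots \geq c_d$ are nonincreasing integers, and the perturbation $z_{ij} = \frac{d(k-i)+j}{2dk}$ takes values in $(0,1/2]$ over $(i,j) \in [k]\times[d]$. For the range claim I would just note that the numerator $d(k-i)+j$ is minimized at $(i,j)=(k,1)$, where it equals $1$, and maximized at $(i,j)=(1,d)$, where it equals $dk$; dividing by $2dk$ gives the interval $[\frac{1}{2dk},\frac12]$.

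For claim 1 (rows decreasing in $j$) and claim 2 (columns increasing in $i$) I would compute single-step differences. Along a row,
\[
    \tu_{i,j+1} - \tu_{ij} = (c_{j+1} - c_j) - (z_{i,j+1} - z_{ij}) = -(c_j - c_{j+1}) - \frac{1}{2dk} < 0,
\]
since $c_j \geq c_{j+1}$ and the second term is strictly negative, so the row is strictly decreasing. Down a column, using $z_{i+1,j} - z_{ij} = -\frac{d}{2dk} = -\frac{1}{2k}$,
\[
    \tu_{i+1,j} - \tu_{ij} = -(c_{i+1} - c_i) - (z_{i+1,j} - z_{ij}) = (c_i - c_{i+1}) + \frac{1}{2k} \geq \frac{1}{2k} > 0,
\]
so the column is strictly increasing.

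For claim 3 (distinctness) I would argue by contradiction: if $\tu_{ij} = \tu_{i'j'}$, then rearranging gives $(c_j - c_i) - (c_{j'} - c_{i'}) = z_{ij} - z_{i'j'}$, where the left side is an integer and the right side lies in $(-\tfrac12,\tfrac12)$ because each $z$ lies in $(0,\tfrac12]$; hence both sides vanish. Then $z_{ij} = z_{i'j'}$ yields $d(k-i)+j = d(k-i')+j'$, i.e. $d(i'-i) = j'-j$, and since $j,j' \in [d]$ the right side has absolute value at most $d-1 < d$, forcing $i=i'$ and therefore $j=j'$. I do not expect any genuine obstacle here; the only point requiring care is pinning down the exact range of $z_{ij}$ (so that the row/column step sizes have the stated sign and the integer/fraction separation in the distinctness argument is valid) and observing the ``base-$d$'' injectivity of the map $(i,j) \mapsto d(k-i)+j$ that drives claim 3.
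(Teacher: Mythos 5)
Your proof is correct and follows essentially the same route as the paper's: monotonicity of both the count-difference term and the $z_{ij}$ term gives claims 1 and 2, and distinctness follows from separating the integer count differences from the fractional $z$ values, which are themselves distinct. Your version is just slightly more explicit (writing out the single-step differences and the injectivity of $(i,j)\mapsto d(k-i)+j$), but the underlying argument is identical.
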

\begin{proof}
    Fix some row $i$. By definition, $c_1 \geq c_2 \geq \cdots \geq c_d$, so $-(c_i - c_1) \geq \cdots \geq -(c_i - c_d)$. The $z_{ij}$ terms also increase with $j$, so each row of $\tu$ is decreasing. By similar logic, each column of $\tu$ is increasing.
    
    Finally, note that since $i \in [k]$ and $j \in [d]$, the $z$ terms are
    \[
        \frac{dk}{2dk}, \frac{dk-1}{2dk}, \ldots, \frac{2}{2dk}, \frac{1}{2dk}
    \]
    and thus are $dk$ distinct values in $(0, 1/2]$. Since any two count differences $-(c_{i_1} - c_{j_1})$ and $-(c_{i_2} - c_{j_2})$ are either identical or at least 1 apart, claim 3) follows.
\end{proof}

We now count ``sequences through $\tu$''. Each sequence $(s_1, \ldots, s_k)$ consists of $k$ values from $[d]$, one from each row of $\tu$, and its score is $\min_{i \in [k]} \tu_{is_i}$, or $-\infty$ if the $k$ values are not distinct. For each $i \in [k]$ and $j \in [d]$, define $\tm(\tu_{ij})$ to be the number of sequences through $\tu$ with distinct elements and score exactly $\tu_{ij}$. $\tm$ and $\tu$ are useful because of the following connection to $m(U_{ij})$, the quantities necessary to sample from the distribution in \cref{eq:utility_distribution}:

\begin{lemma}
    \label{lem:ms}
    For any $i \in [k]$ and $j \in [d]$, let $A_{ij} = \{\tu_{i'j'} \mid  \lceil \tu_{i'j'} \rceil = U_{ij}\}$. Then $m(U_{ij}) = \sum_{\tu_{i'j'} \in A_{ij}} \tm(\tu_{i'j'})$.
\end{lemma}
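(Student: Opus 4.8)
The plan is to unwind the definitions on both sides and show they count the same set of sequences, just grouped differently. Recall that $m(U_{ij})$ is (by the Step-1 description) the number of length-$k$ sequences $S$ of distinct items with $u^*(D,S) = U_{ij} = -(c_i - c_j)$, i.e.\ with $\min_{i' \in [k]} \big(-(c_{i'} - c_{s_{i'}})\big) = U_{ij}$ as an integer value. On the other side, $\tm(\tu_{i'j'})$ counts sequences of distinct items whose perturbed score $\min_{i'' \in [k]} \tu_{i''s_{i''}}$ equals the specific real number $\tu_{i'j'}$. So I would first argue that the perturbation does not change which index achieves the minimum in a way that matters: I want the key claim that for any distinct sequence $S$, writing $i^\star = \argmin_{i'} \tu_{i's_{i'}}$, we have $\lceil \tu_{i^\star s_{i^\star}} \rceil = -(c_{i^\star} - c_{s_{i^\star}}) = \min_{i'}\big(-(c_{i'} - c_{s_{i'}})\big)$, i.e.\ the perturbed argmin is also an integer-level argmin and rounding recovers the integer score.

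The heart of the proof is that claim, and it is the step I expect to need the most care. Since $z_{ij} \in (0, 1/2]$, we have $\tu_{ij} = -(c_i - c_j) - z_{ij} \in [\,-(c_i-c_j) - 1/2,\ -(c_i-c_j)\,)$, so $\lceil \tu_{ij} \rceil = -(c_i - c_j)$ always (the value sits strictly below an integer and at most $1/2$ below it). That immediately handles the ``rounding recovers the integer'' part for whichever cell attains the min. What remains is that the cell attaining the real-valued min also attains the integer-valued min: suppose $-(c_a - c_{s_a}) < -(c_b - c_{s_b})$ are two distinct integer scores appearing in $S$; by Lemma~\ref{lem:U}'s proof two count differences are either equal or at least $1$ apart, and since each $z$ term is at most $1/2$, the ordering of the perturbed values $\tu_{a s_a}$ vs.\ $\tu_{b s_b}$ agrees with the ordering of the integer values whenever those integer values differ; and when two cells have equal integer value, the perturbation $z$ breaks the tie but both lie in the same interval $[\,v - 1/2, v)$, so the real-valued min among them still rounds up to $v$. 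Hence the real-valued minimizing cell $\tu_{i^\star s_{i^\star}}$ satisfies $\lceil \tu_{i^\star s_{i^\star}}\rceil = \min_{i'}\big(-(c_{i'}-c_{s_{i'}})\big) = u^*(D,S)$.

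With the claim in hand the rest is bookkeeping. Every distinct sequence $S$ with $u^*(D,S) = U_{ij}$ has a unique perturbed score $\tu_{i^\star s_{i^\star}}$, which by the claim lies in $A_{ij} = \{\tu_{i'j'} : \lceil \tu_{i'j'}\rceil = U_{ij}\}$; conversely any distinct sequence whose perturbed score is some $\tu_{i'j'} \in A_{ij}$ has $u^*(D,S) = \lceil \tu_{i'j'} \rceil = U_{ij}$. So the set of distinct sequences counted by $m(U_{ij})$ is partitioned according to the exact value of their perturbed score, and each block of that partition — the distinct sequences with perturbed score exactly $\tu_{i'j'}$ — has size $\tm(\tu_{i'j'})$ by definition. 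Summing over the blocks, i.e.\ over $\tu_{i'j'} \in A_{ij}$, gives $m(U_{ij}) = \sum_{\tu_{i'j'} \in A_{ij}} \tm(\tu_{i'j'})$, which is the claim. I would also note in passing that the blocks are genuinely disjoint because the elements of $\tu$ are distinct (Lemma~\ref{lem:U}, claim 3), so no sequence is double-counted, and that sequences with a repeated item contribute $-\infty$ on both sides and are excluded from both counts.
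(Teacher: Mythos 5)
Your proof is correct and follows essentially the same route as the paper's: the whole argument rests on the observation that each $z_{i'j'} \in (0,1/2]$, so $\lceil \tu_{i'j'} \rceil = -(c_{i'}-c_{j'})$ and $A_{ij}$ collects exactly the perturbed cells with integer value $U_{ij}$, after which the identity is a partition of the sequences counted by $m(U_{ij})$ according to their exact perturbed score. The paper states only the one-line observation and leaves the rest implicit; you have simply spelled out the details (in particular, that the perturbed minimizer agrees with the integer minimizer because distinct count differences are at least $1$ apart), which is a faithful elaboration rather than a different argument.
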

\begin{proof}
    Each $z \in (0,1/2]$, so $A_{ij}$ is exactly the collection of $\tu_{i'j'} = -(c_{i'} - c_{j'}) - z_{i'j'}$ where $c_{i'} - c_{j'} = c_i - c_j$. 
\end{proof}

The problem thus reduces to computing the $\tm$ values. For each row $r \in [k]$ and utility $\tu_{ij} \in \tu$, define
\[
    t_r(\tu_{ij}) = \max(\{j' \in [d] \mid \tu_{rj'} \geq \tu_{ij}\}).
\]

Useful simple properties of these $t_r$ values appear below.
\begin{lemma}
\label{lem:t_properties}
    Fix some $\tu_{ij} \in \tu$. Then 1) $t_r(\tu_{ij})$ is nondecreasing in $r$, 2) if sequence $S = (s_1, \ldots, s_k)$ has score $\tu_{ij}$, then for all $r \in [k]$, $s_r \leq t_r(\tu_{ij})$
    and 3) there exists a sequence $S = (s_1, \ldots, s_k)$ of distinct elements with score $\tu_{ij}$ if and only if $t_r(\tu_{ij}) \geq r$ for all $r$.
\end{lemma}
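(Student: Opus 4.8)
The plan is to prove the three claims directly from the definition of $t_r(\tu_{ij})$, using the monotonicity and distinctness properties of $\tu$ collected in \cref{lem:U}; claims 1 and 2 are one-liners, and essentially all the work is in the reverse implication of claim 3. I would start with two observations. First, adopt $\max(\emptyset)=0$, and note that since each row of $\tu$ is decreasing, for every $r$ the set $\{j'\in[d]:\tu_{rj'}\ge\tu_{ij}\}$ is a prefix $\{1,\dots,t_r(\tu_{ij})\}$ of $[d]$; equivalently, $s_r\le t_r(\tu_{ij})$ iff $\tu_{rs_r}\ge\tu_{ij}$. Second --- and this is the identity that makes the whole argument go through --- because row $i$ is \emph{strictly} decreasing (rows are decreasing and entries of $\tu$ are distinct), $\tu_{ij'}\ge\tu_{ij}$ holds exactly when $j'\le j$, so $t_i(\tu_{ij})=j$.

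Claim 1 then follows because columns of $\tu$ are increasing: if $\tu_{rj'}\ge\tu_{ij}$ then $\tu_{(r+1)j'}\ge\tu_{rj'}\ge\tu_{ij}$, so the prefix for row $r$ is contained in that for row $r+1$. For claim 2, a sequence with score $\tu_{ij}$ has distinct entries and $\min_r\tu_{rs_r}=\tu_{ij}$, hence $\tu_{rs_r}\ge\tu_{ij}$, i.e.\ $s_r\le t_r(\tu_{ij})$, for all $r$. For the ``only if'' half of claim 3, given such an $S$, fix $r$: the entries $s_1,\dots,s_r$ are $r$ distinct elements of $[d]$, each at most $t_r(\tu_{ij})$ by claims 1 and 2, so $t_r(\tu_{ij})\ge r$.

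The reverse implication of claim 3 is the crux. Assume $t_r:=t_r(\tu_{ij})\ge r$ for all $r$. Since the entries of $\tu$ are distinct, a sequence has score exactly $\tu_{ij}$ precisely when it has distinct entries, satisfies $s_r\le t_r$ for every $r$ (forcing the minimum to be $\ge\tu_{ij}$), and contains the entry $(i,j)$, i.e.\ $s_i=j$. I would build such a sequence greedily in row order: take $s_r=r$ for $r<i$ (available since $t_r\ge r$), then $s_i=j=t_i$ (distinct from $1,\dots,i-1$ because $t_i\ge i$), and for $r>i$ let $s_r$ be the smallest element of $\{1,\dots,t_r\}$ not already used. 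The only thing to verify is that this greedy step never stalls, and here the identity $t_i=j$ does the work: when row $r>i$ is reached, the $r-1$ previously chosen values all lie in $\{1,\dots,t_r\}$ --- the values $1,\dots,i-1$ since $i-1<i\le t_i\le t_r$, the value $j=t_i$ since $t_i\le t_r$, and every later greedy choice $s_{r'}$ since $s_{r'}\le t_{r'}\le t_r$ --- so a set of size $t_r\ge r>r-1$ cannot be exhausted and an unused element remains. One can phrase this last step equivalently as a Hall's-condition check for the bipartite graph matching each row $r\ne i$ to $\{1,\dots,t_r\}$ and row $i$ to $\{j\}$. The main obstacle is thus not any single inequality but recognizing that the forced coordinate $s_i=j$ is compatible with the remaining greedy assignment, which is exactly what $t_i=j$ (the one place distinctness of $\tu$ is essential) guarantees.
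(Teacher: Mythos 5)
Your proposal is correct and follows essentially the same route as the paper: claims 1 and 2 from the monotonicity of $\tu$, the forward half of claim 3 by the pigeonhole/distinctness argument, and the reverse half by the same greedy construction ($s_r = r$ for $r < i$, $s_i = j$, then fill rows $r > i$ from $[t_r(\tu_{ij})]$ minus the used values). Your explicit observation that $t_i(\tu_{ij}) = j$ (via distinctness of the entries of $\tu$) is only implicit in the paper's step ``since $t_i(\tu_{ij}) \geq i$, $j \geq i$,'' so if anything your write-up is slightly more careful, but the argument is the same.
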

\begin{proof}
    The first two properties follow directly from \cref{lem:U}. For property 3) let $S = (s_1, \ldots, s_k)$ satisfy the conditions of the lemma and assume $t_r(\tu_{ij}) < r$ for some $r$. By properties 1) and 2), for all $r' \leq r$, $s_{r'} \leq t_{r'}(\tu_{ij}) \leq t_r(\tu_{ij})< r$. But that implies $S$ contains $r$ distinct numbers less than $r$, which is a contradiction. In the other direction, suppose $t_r(\tu_{ij}) \geq r$ for all $r$. Define $S = (s_1, \ldots, s_k)$ where $s_r = r$ for $r < i$ and $s_i = j$. Since $t_i(\tu_{ij}) \geq i$, $j \geq i$. Therefore $[t_{i+1}(\tu_{ij})] - \{s_1, \ldots, s_i\}$ contains at least one option for $s_{i+1}$, $[t_{i+2}(\tu_{ij})] - \{s_1, \ldots, s_{i+1}\}$ contains at least one option for $s_{i+2}$, and so on. The resulting $S$ has distinct elements and score $\tu_{ij}$.
\end{proof}
The following lemma connects the $t_r$ and $\tm$ values.
\begin{lemma}
\label{lem:counting_m}
    Given entry $\tu_{ij} $ of $ \tu$, define $n_r = \max(t_r(\tu_{ij}) - (r-1), 0)$. Then $\tm(\tu_{ij}) = \prod_{r \neq i} n_r$.
\end{lemma}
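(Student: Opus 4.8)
The plan is to identify the set of distinct-element sequences through $\tu$ with score exactly $\tu_{ij}$ with a transparent combinatorial object, and then to count that object by a greedy row-by-row sweep, reading off the factors $n_r$ one at a time.

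\textbf{Step 1: characterize the sequences being counted.} Since the entries of $\tu$ are pairwise distinct (\cref{lem:U}), the value $\tu_{ij}$ occurs only at position $(i,j)$; as the score of a sequence is a minimum of the chosen entries, a sequence can attain score exactly $\tu_{ij}$ only if $s_i = j$. Conversely, if $s_i = j$, the elements $s_1,\dots,s_k$ are distinct, and $s_r \le t_r(\tu_{ij})$ for every $r$, then (using that rows of $\tu$ are decreasing and the definition of $t_r$) each chosen entry $\tu_{rs_r}$ is at least $\tu_{ij}$, with equality in row $i$, so the score is exactly $\tu_{ij}$. Together with \cref{lem:t_properties} part 2, this shows that $\tm(\tu_{ij})$ equals the number of injective maps $s\colon [k]\setminus\{i\}\to[d]$ satisfying $s_r \le t_r(\tu_{ij})$ for all $r \ne i$ and $s_r \ne j$ for all $r \ne i$ (the omitted coordinate $s_i = j$ automatically obeys $s_i \le t_i(\tu_{ij})$, since $\tu_{ij}\ge\tu_{ij}$).

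\textbf{Step 2: locate $j$ relative to the thresholds $t_r(\tu_{ij})$.} Because each column of $\tu$ is strictly increasing (\cref{lem:U}), $\tu_{rj} < \tu_{ij}$ when $r < i$ and $\tu_{rj} > \tu_{ij}$ when $r > i$; because each row is strictly decreasing, $\{j' : \tu_{rj'} \ge \tu_{ij}\}$ is a prefix $\{1,\dots,t_r(\tu_{ij})\}$ of $[d]$. Hence $t_r(\tu_{ij}) < j$ for every $r < i$, and $t_r(\tu_{ij}) \ge j$ for every $r > i$.

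\textbf{Step 3: greedy count.} Enumerate the rows $r \ne i$ in increasing order and choose $s_r$ one at a time. By \cref{lem:t_properties} part 1 the quantities $t_r(\tu_{ij})$ are nondecreasing in $r$, so when we reach row $r$ every previously assigned value lies in $\{1,\dots,t_r(\tu_{ij})\}$; these values are distinct and, by construction, none equals $j$. If $r < i$ there are $r-1$ previously assigned values and, by Step 2, $j \notin \{1,\dots,t_r(\tu_{ij})\}$, so $s_r$ has $t_r(\tu_{ij}) - (r-1)$ valid choices. If $r > i$ there are $r-2$ previously assigned values, and by Step 2 the value $j$ is a further forbidden element of $\{1,\dots,t_r(\tu_{ij})\}$ distinct from all of them, so $s_r$ has $t_r(\tu_{ij}) - (r-2) - 1 = t_r(\tu_{ij}) - (r-1)$ valid choices. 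In either case the number of choices is exactly $n_r = \max(t_r(\tu_{ij}) - (r-1),0)$; if at some step this count is nonpositive then no valid assignment exists and $\tm(\tu_{ij}) = 0$, which agrees with $\prod_{r\ne i} n_r = 0$. Multiplying the per-step counts yields $\tm(\tu_{ij}) = \prod_{r \ne i} n_r$.

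\textbf{Expected obstacle.} The delicate point is the bookkeeping in Step 3: the coordinate $j$ is an extra forbidden value precisely for the rows after $i$, and this is exactly compensated by those rows having one fewer previously assigned coordinate (because coordinate $i$ is skipped). Step 2's column-monotonicity observation is what makes this compensation exact, so I expect that to be the crux of a clean writeup; everything else is routine.
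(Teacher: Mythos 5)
Your proof is correct and follows essentially the same route as the paper's: characterize the distinct-element sequences with score $\tu_{ij}$ as those with $s_i = j$ and $s_r \in [t_r(\tu_{ij})]$ for $r \neq i$, then count them row by row. Your Steps 2--3 additionally supply the bookkeeping (where $j$ sits relative to each threshold $t_r(\tu_{ij})$, and how the skipped row $i$ exactly compensates for $j$ being a forbidden value in rows $r > i$) that the paper's one-line count $\prod_{r \neq i}\left[t_r(\tu_{ij}) - (r-1)\right]$ leaves implicit.
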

\begin{proof}
    By Lemma~\ref{lem:t_properties} we know the statement is true for $\tu_{ij}$ such that $\tm(\tu_{ij}) = 0$.
    If $\prod_{r \neq i} n_r > 0$ then any sequence with score $\tu_{ij}$ consists of distinct elements $S = (s_1, \ldots, s_k)$ such that $s_i = j$ and for all $r \neq i$, $s_r \in [t_r(\tu_{ij})]$ (otherwise $S$ has score  less than $\tu_{ij}$). The number of such sequences is $\prod_{r \neq i} [t_r(\tu_{ij}) - (r-1)]$.
\end{proof}

A naive solution thus computes all $dk^2$ of the $t_r$ values, then uses them to compute the $\tilde m$ values. We can avoid this by observing that, if we sort the values of $\tu$, then adjacent values in the sorted order have almost identical $t_r$.

\begin{lemma}
\label{lem:compute_ts}
    Let $\tu_{(1)}, \ldots, \tu_{(dk)}$ denote the entries of $\tu$ sorted in decreasing order. For each $\tu_{(a)}$, let $r(a)$ denote its row index in $\tu$. Then: 1) for each $a \in [dk-1]$, $t_{r(a+1)}(\tu_{(a+1)}) = t_{r(a+1)}(\tu_{(a)}) + 1$, and 2) for $r' \neq r(a+1)$, $t_{r'}(\tu_{(a+1)}) = t_{r'}(\tu_{(a)})$.
\end{lemma}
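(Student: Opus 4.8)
The plan is to read $t_r(v)$ as a counting function and watch how it changes as the threshold $v$ sweeps down through the sorted list of matrix entries. Since row $r$ of $\tu$ is decreasing (Lemma~\ref{lem:U}), the set $\{j' \in [d] : \tu_{rj'} \geq v\}$ is a prefix $\{1, \dots, m\}$ of the column indices, so $t_r(v)$ is exactly the number of entries in row $r$ that are $\geq v$ (with the convention $\max(\emptyset) = 0$ when no entry of the row reaches $v$). Consequently, lowering the threshold from $\tu_{(a)}$ to the strictly smaller value $\tu_{(a+1)}$ only enlarges each prefix, and for every row $r'$,
\[
    t_{r'}(\tu_{(a+1)}) - t_{r'}(\tu_{(a)}) = \#\{\text{entries of row } r' \text{ with value in } [\tu_{(a+1)}, \tu_{(a)})\}.
\]

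The core observation is that, because $\tu_{(a)}$ and $\tu_{(a+1)}$ are consecutive in the decreasing order $\tu_{(1)} > \cdots > \tu_{(dk)}$ of all $dk$ entries (distinct by Lemma~\ref{lem:U}), the only matrix entry whose value lies in the half-open interval $[\tu_{(a+1)}, \tu_{(a)})$ is $\tu_{(a+1)}$ itself: anything $\geq \tu_{(a+1)}$ is among $\tu_{(1)}, \dots, \tu_{(a+1)}$, and anything $< \tu_{(a)}$ is among $\tu_{(a+1)}, \dots, \tu_{(dk)}$. Since $\tu_{(a+1)}$ lies in row $r(a+1)$, this gives both claims at once: for $r' \neq r(a+1)$ the count above is zero, so $t_{r'}(\tu_{(a+1)}) = t_{r'}(\tu_{(a)})$ (part 2); for $r' = r(a+1)$ the count is exactly one, so $t_{r(a+1)}(\tu_{(a+1)}) = t_{r(a+1)}(\tu_{(a)}) + 1$ (part 1). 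Note $\tu_{(a+1)}$ is not already counted in $t_{r(a+1)}(\tu_{(a)})$ precisely because $\tu_{(a+1)} < \tu_{(a)}$.

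There is essentially no obstacle here; the only points requiring a little care are the empty-set convention for $\max$ — relevant when $\tu_{(a+1)}$ happens to be the largest entry of its own row, in which case $t_{r(a+1)}(\tu_{(a)}) = 0$ and the formula still reads $0 + 1 = 1$ — and the insistence that the interval be half-open, so that $\tu_{(a)}$ is excluded and $\tu_{(a+1)}$ is included in the count.
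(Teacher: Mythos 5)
Your proof is correct and rests on the same key observation as the paper's: since the entries of $\tu$ are distinct, no matrix entry lies strictly between two consecutive values $\tu_{(a)}$ and $\tu_{(a+1)}$ in the sorted order. The paper proves each part separately by contradiction (an intervening entry would violate adjacency), whereas you recast $t_{r}$ as a prefix count and obtain both parts at once from a single interval-counting identity — a cleaner packaging of the same idea.
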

\begin{proof}
    For 1), assume $t_{r(a+1)}(\tu_{(a+1)}) > t_{r(a+1)}(\tu_{(a)}) + 1$. Then we can define $j$ such that \narxiv{$t_{r(a+1)}(\tu_{(a)}) < j < t_{r(a+1)}(\tu_{(a+1)})$,}\arxiv{\[t_{r(a+1)}(\tu_{(a)}) < j < t_{r(a+1)}(\tu_{(a+1)}),\]}
    which implies that $\tu_{(a)} > \tu_{r(a+1), j} > \tu_{(a+1)}$. This contradicts $\tu_{(a)}$ and $\tu_{(a+1)}$ being adjacent in sorted order. For 2), assume there exists $r' \neq r(a+1)$ such that $t_{r'}(\tu_{(a+1)}) > t_{r'}(\tu_{(a)})$. (If we instead assumed $t_{r'}(\tu_{(a+1)}) < t_{r'}(\tu_{(a)})$, this would contradict the sorting order of $\tu$ and the definition of $t_r$.) This implies that $\tu_{(a)} > \tu_{r',t_{r'}(\tu_{(a+1)})} > \tu_{(a+1)}$, which again contradicts $\tu_{(a)}$ and $\tu_{(a+1)}$ being adjacent in sorted order.
\end{proof}

According to the above lemma, we can compute all of the $\tm$ as follows.  First, sort the entries of $\tu$, recording the row and column indices of each $\tu_{(a)}$ as $r(a)$ and $c(a)$.  Then, create a vector storing the $t$ values for $\tu_{(1)}$.  These can be combined into $\tm(\tu_{(1)})$ using \cref{lem:counting_m}.  If $\tm(\tu_{(1)})$ is non-zero, then we can get $\tm(\tu_{(2)})$ simply by rescaling;  according to \cref{lem:compute_ts}, adding one to entry $t_{r(2)}$ gives the new vector of $t$'s, so only one term in the formula for $\tm$ changes in going from $U_{(1)}$ to $U_{(2)}$.  We can thus compute each $\tm(\tu_{(a)})$ in constant time, and compute all $\tm$ values in time $O(dk\log(k) + d\log(d))$.

\subsubsection{Sampling a Utility}

Given the $\tm$ values above, we sample from a slightly different distribution than the one defined in \cref{eq:utility_distribution}. The new distribution is, for $\tu_{ij} \in \tu$,
\begin{equation}
    \label{eq:alt_utility_distribution}
    \P{}{\tu_{ij}} \propto \tm(\tu_{ij})\exp\left(\frac{\eps \lceil \tu_{ij} \rceil}{2}\right).
\end{equation}
When the $\tm$ are large, sampling can be done in a numerically stable manner using logarithmic quantities (see, e.g., Appendix A.6 of~\citet{MG20}).

\subsubsection{Sampling a Sequence}

After sampling $\tu_{ij}$ from \cref{eq:alt_utility_distribution}, we sample a sequence of item indices uniformly at random from the collection of sequences with score $\tu_{ij}$. The sample fixes $s_i = j$. To sample the remaining $k-1$ items, we sample $s_1$ uniformly at random from $[t_1(\tu_{ij})] \setminus \{j\}$, $s_2$ from $[t_2(\tu_{ij})] \setminus \{s_1, j\}$, and so on. \cref{lem:t_properties} guarantees that this process never attempts to sample from an empty set.

\subsubsection{Overall Algorithm}

\joint's overall guarantees and pseudocode appear below.

\begin{figure}[t]
    \begin{algorithm}[H]
    \begin{algorithmic}[1]
       \STATE {\bfseries Input:} Vector of item counts $c_1, \ldots, c_d$, number of items to estimate $k$, privacy parameter $\eps$ \alglinelabel{algln:input}
       \STATE Sort and relabel items so $c_1 \geq c_2 \geq \cdots \geq c_d$ \alglinelabel{algln:sort_counts}
       \STATE Construct matrix $\tu$ by $\tu_{ij} = -(c_i - c_j) - \tfrac{d(k-i) + j}{2dk}$ \alglinelabel{algln:build_matrix}
       \STATE Sort $\tu$ in decreasing order to get $\tu_{(1)}, \ldots, \tu_{(dk)}$, storing the $(\text{row}, \text{column})$ of each $\tu_{(a)}$ as $(r(a), c(a))$
       \alglinelabel{algln:sort_matrix}
       \STATE Initialize $n_1, \ldots, n_k \gets 0$ \alglinelabel{algln:init_ns}
       \STATE Initialize set of non-zero $n_i$, $N \gets \emptyset$
       \STATE Initialize $b \gets 0$ \alglinelabel{algln:init_b}
       \FOR{$a = 1, \ldots, dk$} \alglinelabel{algln:fill_ns}
           \STATE $n_{r(a)} \gets c(a) - (r(a) - 1)$
           \STATE $N \gets N \cup \{r(a)\}$
           \STATE \textbf{if} $|N| = k$ \textbf{then break}
           \STATE Set $\tm(\tu_{(a)}) \gets 0$, and set $b \gets a$
       \ENDFOR
       \STATE Set $p \leftarrow \prod_{r \in [k]} n_r$ \alglinelabel{algln:first_p}
       \STATE Compute $\tm(\tu_{(b+1)}) \gets p / n_{r(a)}$
       \alglinelabel{algln:first_tm}
       \FOR{$a = b+2, \ldots, dk$} \alglinelabel{algln:loop_a}
            \STATE Set $p \gets p / n_{r(a)}$
            \STATE Compute $\tm(\tu_{(a)}) \gets p$
            \alglinelabel{algln:compute_tm}
            \STATE Update $n_{r(a)} \gets n_{r(a)} + 1$
            \STATE Update $p \gets p \cdot n_{r(a)}$
            \alglinelabel{algln:update_t}
       \ENDFOR
       \STATE Sample a utility $\tu_{ij}$ using  \cref{eq:alt_utility_distribution} \alglinelabel{algln:sample_utility}
       \STATE Initialize size-$k$ output vector $s$ with $s_i \gets j$ \alglinelabel{algln:init_output}
       \FOR{index $i' = 1, 2, \ldots, i-1, i+1, \ldots, k$} \alglinelabel{algln:loop_output}
            \STATE Compute $t_{i'}(\tu_{ij})$ by iterating through row ${i'}$ of $\tu$
            \STATE Sample $s_{i'}$ uniformly from $[t_{i'}(\tu_{ij})] \setminus \{j, s_1, s_2, \ldots, s_{i'-1}\}$ \alglinelabel{algln:uniform_sample}
       \ENDFOR
       \STATE {\bfseries Output:} Vector of item indices $s$
    \end{algorithmic}
    \caption{Efficiently sampling \joint}
    \label{alg:main}
    \end{algorithm}
\end{figure}

\begin{restatable}{theorem}{Main}
\label{thm:main}
    \joint{} samples a sequence from the exponential mechanism with utility $u^*$ in time $O(dk\log(k) + d\log(d))$ and space $O(dk)$.
\end{restatable}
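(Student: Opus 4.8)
The plan is to prove the two halves of \cref{thm:main} separately: correctness of the output distribution, and the time/space bounds. For correctness I would first reduce the exponential mechanism's distribution to the two‑stage sampler sketched in \cref{subsec:sampling}. By \cref{lem:sensitivity} we have $\Delta(u^*)=1$, so by \cref{def:exp} \joint{} outputs each sequence $S$ of distinct items with probability proportional to $\exp(\eps u^*(D,S)/2)$ and each non‑distinct sequence with probability $0$. Grouping distinct sequences by their common integer utility value $U$, this is the same as: draw $U$ with probability $\propto m(U)\exp(\eps U/2)$ (i.e.\ \cref{eq:utility_distribution}), then output a uniformly random distinct sequence of score $U$. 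It therefore suffices to show the algorithm implements exactly this sampler, working through the perturbed, all‑distinct values $\tu_{ij}$ instead of the $U_{ij}$.

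The bridge from the $U_{ij}$ to the $\tu_{ij}$ is the observation that, since every $z_{ij}\in(0,1/2]$ and any two count differences are equal or at least $1$ apart, the unperturbed score of a distinct sequence $S$ equals $\lceil(\text{its perturbed score})\rceil$. Hence the distinct sequences of unperturbed score $U_{ij}$ partition, by their perturbed score, into classes indexed by $A_{ij}=\{\tu_{i'j'}:\lceil\tu_{i'j'}\rceil=U_{ij}\}$, with class sizes $\tm(\tu_{i'j'})$; this is precisely \cref{lem:ms}. Consequently, summing \cref{eq:alt_utility_distribution} over $A_{ij}$ recovers \cref{eq:utility_distribution}, and drawing a class with probability proportional to its size and then a uniform element within it is uniform over the union. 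So sampling $\tu_{ij}$ from \cref{eq:alt_utility_distribution} and then a uniformly random distinct sequence of perturbed score $\tu_{ij}$ reproduces the two‑stage sampler, hence the exponential mechanism.

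It then remains to check that the algorithm (a) computes the correct $\tm(\tu_{(a)})$ for every entry and (b) samples uniformly from the distinct sequences of a given perturbed score. For (a): the first loop sets $\tm(\tu_{(a)})=0$ exactly while some row of $\tu$ has not appeared among $\tu_{(1)},\dots,\tu_{(a)}$; such an entry has $t_r(\tu_{(a)})=0<r$ for the missing row, so $\tm(\tu_{(a)})=0$ by \cref{lem:t_properties}(3). Since each row of $\tu$ is strictly decreasing, row $r$ first appears at $\tu_{r1}$, and since column $1$ is strictly increasing the last row to appear is row $1$, at $\tu_{11}$; thus the loop breaks with $\tu_{(b+1)}=\tu_{11}$. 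At that moment each $n_r$ was last written when row $r$ was encountered and (by the monotonicity of the $t$'s in \cref{lem:compute_ts}) still equals $t_r(\tu_{11})-(r-1)$; moreover the true top‑$k$ sequence $(1,\dots,k)$ has perturbed score $\tu_{11}$, so $\tm(\tu_{11})\ge 1$, whence $t_r(\tu_{11})\ge r$ for all $r$ and lines~\ref{algln:first_p}--\ref{algln:first_tm} output $\tm(\tu_{(b+1)})=\prod_{r\ne r(b+1)}n_r$ correctly via \cref{lem:counting_m}. The second loop then maintains the invariant that, entering iteration $a$, $p=\prod_{r}n_r$ with $n_r=t_r(\tu_{(a-1)})-(r-1)$: by \cref{lem:compute_ts} only coordinate $r(a)$ of the $t$‑vector changes (incrementing by one), so dividing out $n_{r(a)}$, recording $\tm(\tu_{(a)})$, incrementing $n_{r(a)}$, and multiplying back restores the invariant and gives the right value by \cref{lem:counting_m}, all factors being positive because $t_r(\tu_{(a)})\ge t_r(\tu_{11})\ge r$. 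For (b): a distinct sequence has perturbed score exactly $\tu_{ij}$ iff $s_i=j$ (distinctness of $\tu$'s entries forces this) and $s_r\le t_r(\tu_{ij})$ for $r\ne i$; since column $j$ is strictly increasing, $t_r(\tu_{ij})<j$ for $r<i$ and $t_r(\tu_{ij})\ge j$ for $r>i$, which together with the $t$'s being nondecreasing makes the candidate set in each pass of the final loop have size exactly $t_{i'}(\tu_{ij})-(i'-1)$ — independent of the partial sequence and positive by \cref{lem:t_properties}(3). Hence the loop never gets stuck and produces each valid sequence with probability $1/\prod_{r\ne i}(t_r(\tu_{ij})-(r-1))=1/\tm(\tu_{ij})$, i.e.\ uniformly.

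For the runtime: sorting the counts is $O(d\log d)$; building $\tu$ is $O(dk)$, and because each of its $k$ rows is already sorted, $\tu_{(1)},\dots,\tu_{(dk)}$ is obtained by a $k$‑way merge in $O(dk\log k)$ (this is the trick that avoids the naive $O(dk\log(dk))$). Both passes over the $dk$ sorted entries do $O(1)$ work each (maintaining $N$ with a length‑$k$ bit array), line~\ref{algln:first_p} is $O(k)$, sampling the utility from \cref{eq:alt_utility_distribution} is one $O(dk)$ pass over the weights, and the final loop spends $O(d)$ per row to compute $t_{i'}$ and draw its uniform sample, $O(dk)$ total; everything is $O(dk\log k+d\log d)$. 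Space is dominated by $\tu$, the sorted list, and the array of $\tm$ values, all $O(dk)$. The step I expect to be the main obstacle is (a): pinning down that the first loop breaks precisely at $\tu_{11}$, that $\tm$ stays strictly positive from that entry onward (so the division‑based rescaling never divides by zero or drops a truncated factor), and that the rescaling invariant is exactly the one licensed by \cref{lem:compute_ts}. The uniformity claim in (b) is a close second, since it hinges on the easily‑overlooked fact that the per‑step candidate‑set sizes do not depend on the partial sequence, which itself rests on the strict monotonicity of $\tu$'s columns.
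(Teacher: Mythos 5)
Your proposal is correct and follows essentially the same route as the paper's proof: it establishes the output-distribution equivalence via \cref{lem:ms} and \cref{lem:counting_m} (grouping sequences first by perturbed score $\tu_{ij}$ and then by integer utility) and then accounts for the runtime line by line, including the $k$-way merge for sorting $\tu$. The only difference is that you explicitly verify the loop invariants of \cref{alg:main} (the break occurring at $\tu_{11}$, positivity of the divisors in the rescaling, and the partial-sequence-independence of the candidate-set sizes in the final sampling loop), details the paper's proof leaves implicit in its supporting lemmas.
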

\begin{proof}
    We sketch here, deferring details to \cref{sec:privacy-proof}.
    
    \textbf{Privacy}: By \cref{lem:ms}, to get $m$ it suffices to compute $\tm$\footnote{Note that even if items $i_1$ and $i_2$ have identical counts, they may have differing sequence counts, $\tm(\tu_{i_1j}) \neq \tm(\tu_{i_2j})$.  The sampling in the loop on Line~\ref{algln:loop_output} implicitly makes up for this difference.  See the full privacy proof in \cref{sec:privacy-proof}.}, and by \cref{lem:counting_m} it suffices to compute the $t$ values. A score sampled from \cref{eq:alt_utility_distribution} may be non-integral; taking its ceiling produces a utility $U_{ij} = -(c_i - c_j)$, with the desired distribution from \cref{eq:utility_distribution}.
    
    \textbf{Runtime and space}: Referring to \cref{alg:main}, line~\ref{algln:sort_counts} takes time $O(d\log(d))$ and space $O(d)$. 
    Line~\ref{algln:build_matrix} takes time and space $O(dk)$.  Line~\ref{algln:sort_matrix} takes time $O(dk\log(k))$ and space $O(dk)$; since each row of $U$ is already decreasing, we can use $k$-way merging~\cite{K97} instead of naive sorting. All remaining lines require $O(dk)$ time and space.
\end{proof}

\joint{} has the same guarantees (\cref{thm:main_privacy}, \cref{thm:main_utility}) as the exponential mechanism described at the beginning of this section, since its output distribution is identical.
\section{Experiments}
\label{sec:experiments}

\narxiv{\begin{figure*}[p]
    \centering
    \includegraphics[scale=0.32,trim={0 -1.1cm 0 0},clip]{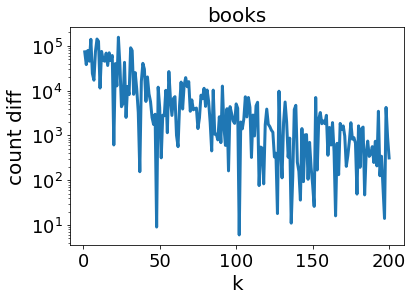}
    \includegraphics[scale=0.32]{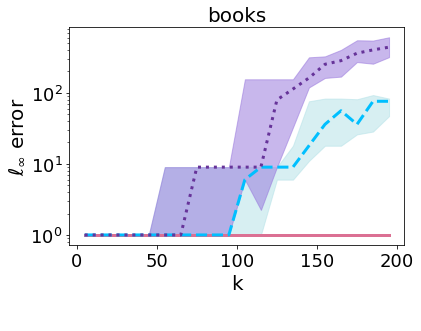}
    \includegraphics[scale=0.32]{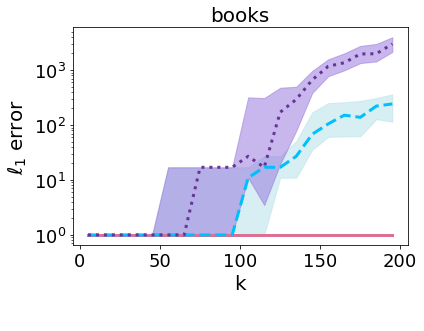} \\ \vspace{-10pt}
    \includegraphics[scale=0.32,trim={0 -1.1cm 0 0},clip]{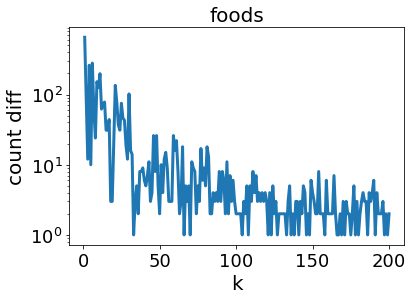}
    \includegraphics[scale=0.32]{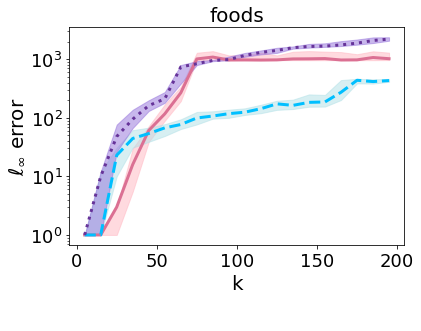}
    \includegraphics[scale=0.32]{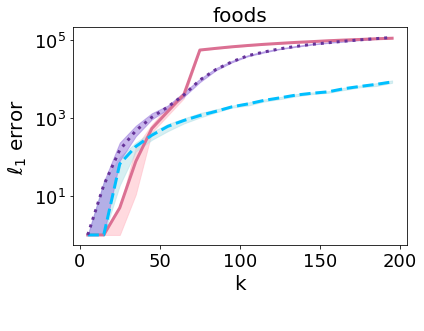} \\ \vspace{-10pt}
    \includegraphics[scale=0.32,trim={0 -1.1cm 0 0},clip]{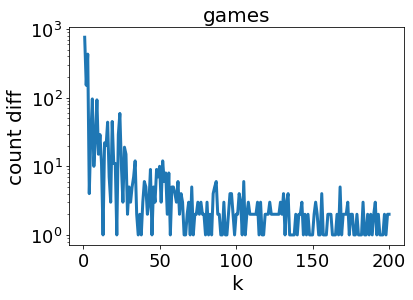}
    \includegraphics[scale=0.32]{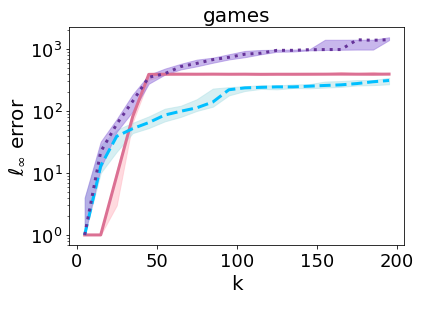}
    \includegraphics[scale=0.32]{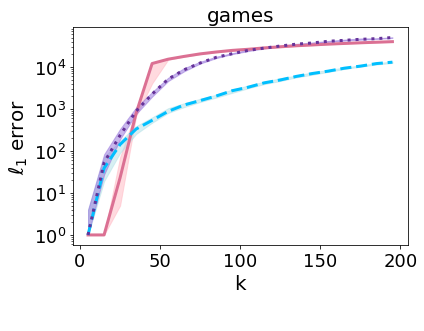} \\ \vspace{-10pt}
    \includegraphics[scale=0.32,trim={0 -1.1cm 0 0},clip]{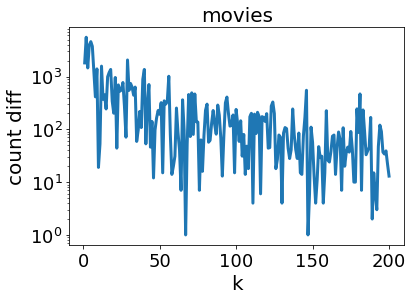}
    \includegraphics[scale=0.32]{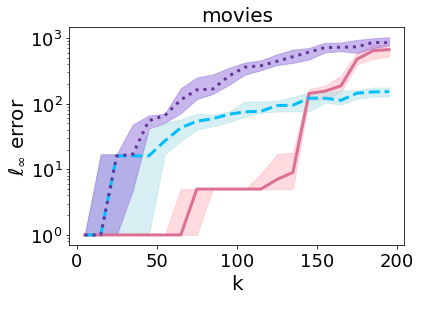}
    \includegraphics[scale=0.32]{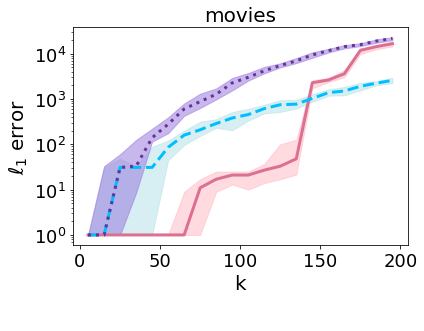} \\ \vspace{-10pt}
    \includegraphics[scale=0.32,trim={0 -1.1cm 0 0},clip]{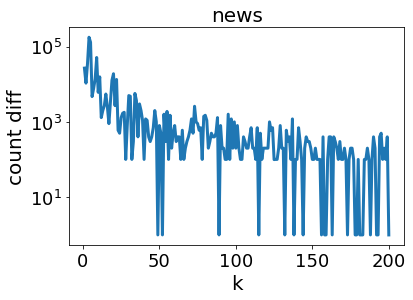}
    \includegraphics[scale=0.32]{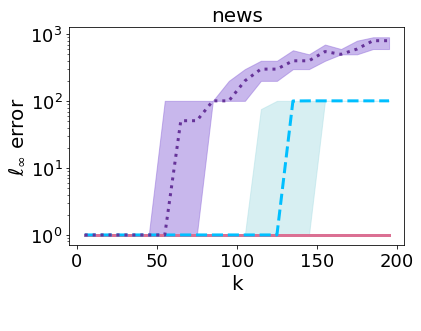}
    \includegraphics[scale=0.32]{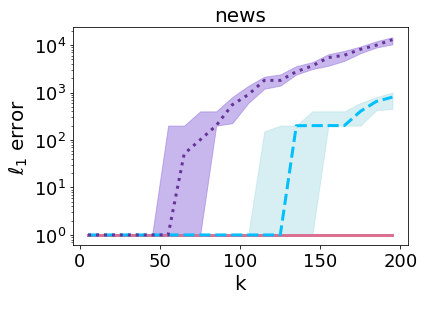} \\ \vspace{-10pt}
    \includegraphics[scale=0.32,trim={0 -1.1cm 0 0},clip]{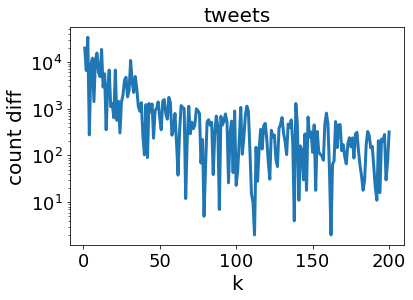}
    \includegraphics[scale=0.32]{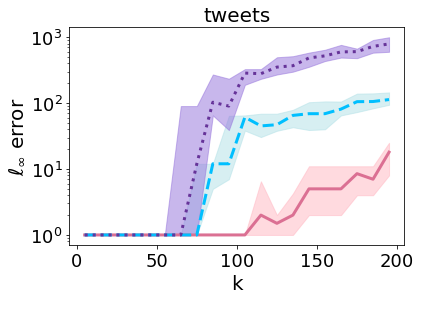}
    \includegraphics[scale=0.32]{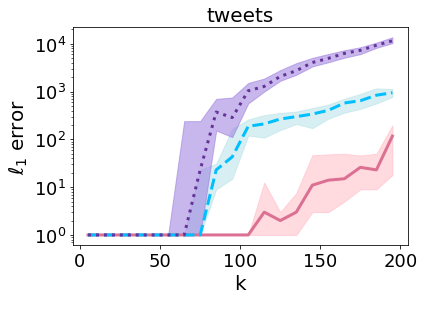} \\ \vspace{-10pt}
    \includegraphics[scale=0.4]{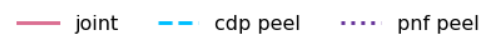}
    \caption{Note that all plots have a logarithmic $y$-axis and quantities are padded by 1 to avoid discontinuities on the logarithmic $y$-axis.  \textbf{Left column}: Count differences $c_k - c_{k+1}$ for each dataset.  \textbf{Center column}: $\ell_\infty$ error.  \textbf{Right column}: $\ell_1$ error.}
    \label{fig:diffs_and_errors}
\end{figure*}}

Our experiments compare the peeling and joint mechanisms across several real-world datasets using the error metrics from \cref{def:errors}. All datasets and experiment code are public~\cite{Go22}. As described in \cref{subsec:related}, we only present the best pure and approximate DP baselines. Other methods are available in the experiment code. For completeness, example error plots featuring all methods appear in \cref{sec:gamma_and_laplace}.

\subsection{Comparison Methods}
\label{subsec:methods}
\subsubsection{Pure DP Peeling Mechanism}
We start with the pure DP variant, denoted \ppeel. It uses $k$ $\tfrac{\eps}{k}$-DP applications of the permute-and-flip mechanism, which dominates the exponential mechanism under basic composition (Theorem 2~\cite{MS20}). We use the equivalent exponential noise formulation~\cite{DKSSWXZ21}, where the exponential distribution $\expo{\lambda}$ is defined over $x \in \mathbb{R}$ by
\begin{equation}
\label{eq:expo}
    \P{}{x;\lambda} = \indic{x \geq 0} \cdot \lambda \cdot \exp\left(-\lambda x\right).
\end{equation}
Its pseudocode appears in \cref{alg:ppeel}. We omit the factor of 2 in the exponential distribution scale because the count utility function is monotonic (see \cref{def:monotonic} and Remark 1 of~\citet{MS20}).

\begin{algorithm}
\begin{algorithmic}[1]
   \STATE {\bfseries Input:} Vector of item counts $c_1, \ldots, c_d$, number of items to estimate $k$, privacy parameter $\eps$
   \STATE Initialize set of available items $A \gets [d]$
   \STATE Initialize empty size-$k$ output vector $s$
   \FOR{$j \in [k]$}
    \FOR{$i \in A$}
        \STATE Draw exponential noise $\eta \sim \expo{\eps/k}$
        \STATE Compute noisy count $\tilde c_i \gets c_i + \eta$
    \ENDFOR
    \STATE Set $i^* \gets \arg \max_{i \in A} \tilde c_i$
    \STATE Record chosen index, $s[j] \gets i^*$
    \STATE Remove chosen index, $A \gets A \setminus \{i^*\}$
    \ENDFOR
   \STATE {\bfseries Output:} Vector of item indices $s$
\end{algorithmic}
\caption{\ppeel, pure DP peeling mechanism}
\label{alg:ppeel}
\end{algorithm}

\begin{lemma}
\label{lem:ppeel_dp}
    \ppeel{} is $\eps$-DP.
\end{lemma}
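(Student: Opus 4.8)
The plan is to decompose \ppeel{} into $k$ private selection rounds and apply basic (adaptive) composition. First I would record two properties of the item-count utility function $u(D,i) = c_i$: it has $\ell_1$-sensitivity $1$, since by \cref{def:top_k} each user contributes at most one to any single count, so adding or removing a user changes any fixed count by at most one; and it is monotonic in the sense of \cref{def:monotonic}, since adding a user never decreases a count. These two facts are what license using the exponential-noise scale $\eps/k$ with no factor of $2$.

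Second, I would identify one iteration $j$ of the outer loop — draw $\expo{\eps/k}$ noise for each available item, add it to that item's count, and return the $\arg\max$ — as precisely the permute-and-flip mechanism applied to the utility $u$ over the current available set $A$, written in its equivalent exponential-noise formulation~\cite{DKSSWXZ21}. By Theorem 2 and Remark 1 of~\citet{MS20}, for a monotonic utility of sensitivity $1$, permute-and-flip with noise scale $\eps/k$ is $\tfrac{\eps}{k}$-DP. The key point to note is that this holds for \emph{every} fixed subset $A \subseteq [d]$, since restricting the output space changes neither the sensitivity nor the monotonicity of the counts.

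Third, I would view \ppeel{} as the $k$-fold adaptive sequential composition of these rounds. At the start of round $j$ the set $A$ is a deterministic function of the indices $s[1],\dots,s[j-1]$ output in previous rounds, so conditioned on those outputs round $j$ is an instance of the $\tfrac{\eps}{k}$-DP mechanism from the previous paragraph, and the subsequent deletion $A \gets A \setminus \{i^*\}$ is post-processing that incurs no privacy cost. Basic adaptive composition then gives that the whole mechanism is $k \cdot \tfrac{\eps}{k} = \eps$-DP, which is the claim.

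The only nontrivial point — the main obstacle — is making precise that the peeling structure is handled by \emph{adaptive} composition rather than composition of independent mechanisms: one must verify that running permute-and-flip on a subset $A$ that depends on earlier outputs still satisfies the per-round $\tfrac{\eps}{k}$-DP bound (it does, because the relevant sensitivity and monotonicity are invariant to restricting the candidate set) and that removing the selected index is pure post-processing. Beyond that, the argument is a direct invocation of the permute-and-flip privacy guarantee together with the adaptive composition theorem.
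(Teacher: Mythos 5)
Your proof is correct and follows the same route the paper takes: the paper states \cref{lem:ppeel_dp} without a formal proof, justifying it in the surrounding text exactly as you do --- each round is a permute-and-flip instance on a sensitivity-$1$, monotonic count utility (hence $\tfrac{\eps}{k}$-DP with no factor of $2$, per Remark 1 of \citet{MS20}), and basic composition over the $k$ rounds gives $\eps$-DP. Your additional care about adaptive composition and the removal of $i^*$ being post-processing is a correct elaboration of what the paper leaves implicit, not a different argument.
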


\subsubsection{Approximate DP Peeling Mechanism}
The approximate DP variant instead uses $k$ $\eps'$-DP applications of the exponential mechanism. We do this because the exponential mechanism admits a CDP analysis that takes advantage of its bounded-range property for stronger composition; a similar analysis for permute-and-flip is not known.

We use the Gumbel-noise variant of the peeling mechanism~\cite{DR19}. This adds Gumbel noise to each raw count and outputs the sequence of item indices with the $k$ highest noisy counts. The Gumbel distribution $\gum{\beta}$ is defined over $x \in \mathbb{R}$ by
\begin{equation}
    \label{eq:gumbel}
    \P{}{x;\beta} = \frac{1}{\beta} \cdot \exp\left(-\frac{x}{\beta} - e^{-x/\beta}\right)
\end{equation}
and the resulting pseudocode appears in \cref{alg:cpeel}.

\begin{algorithm}
\begin{algorithmic}[1]
   \STATE {\bfseries Input:} Vector of item counts $c_1, \ldots, c_d$, number of items to estimate $k$, privacy parameter $\eps'$
   \FOR{$i \in [d]$}
        \STATE Draw Gumbel noise $\eta \sim \gum{k/\eps'}$
        \STATE Compute noisy count $\tilde c_i \gets c_i + \eta$
   \ENDFOR
   \STATE {\bfseries Output:} Ordered sequence of the $k$ item indices with the highest noisy counts
\end{algorithmic}
\caption{\cpeel, approx DP peeling mechanism}
\label{alg:cpeel}
\end{algorithm}

By Lemma 4.2 in~\citet{DR19}, \cpeel{} has the same output distribution as repeatedly applying the exponential mechanism and is $\eps'$-DP. A tighter analysis is possible using CDP. While an $\eps$-DP algorithm is always $\tfrac{\eps^2}{2}$-CDP, an $\eps$-DP invocation of the exponential mechanism satisfies a stronger $\tfrac{\eps^2}{8}$-CDP guarantee (Lemmas 3.2 and 3.4~\cite{CR21}). Combining this with a generic conversion from CDP to approximate DP (Proposition 1.3~\cite{BS16}) yields the following privacy guarantee:
\begin{lemma}
\label{lem:cpeel}
    \cpeel{} is $(\eps,\delta)$-DP for any $\delta > 0$ and 
    \[
        \eps = \frac{k\eps'^2}{8} + 2\eps'\sqrt{\frac{k\log(1/\delta)}{8}}.
    \]
\end{lemma}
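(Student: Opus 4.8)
The plan is to obtain the statement by chaining three off-the-shelf facts: (i) a distributional equivalence that rewrites \cpeel{} as an adaptive sequence of $k$ exponential mechanisms, (ii) the sharpened concentrated-DP (CDP) bound enjoyed by the exponential mechanism because of its bounded range, and (iii) adaptive CDP composition followed by the standard CDP-to-approximate-DP conversion.

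First I would invoke Lemma 4.2 of~\citet{DR19}: adding i.i.d.\ Gumbel noise to every raw count and reporting the ordered indices of the $k$ largest noisy counts produces exactly the same output distribution as the peeling process that repeatedly (a) runs the exponential mechanism with the count utility over the surviving items, (b) records the winner, and (c) deletes it, for $k$ rounds. As the surrounding text records, with the chosen noise scale each of these rounds is an $\eps'$-DP invocation of the exponential mechanism; moreover round $j$ remains $\eps'$-DP conditioned on any outcome of the first $j-1$ rounds, since the count utility restricted to the remaining items still has $\ell_1$ sensitivity $1$. Thus it suffices to bound the privacy of an adaptive $k$-fold composition of $\eps'$-DP exponential mechanisms.

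Next I would apply the bounded-range improvement: whereas a generic $\eps'$-DP mechanism is only guaranteed to be $\tfrac{\eps'^2}{2}$-CDP, Lemmas 3.2 and 3.4 of~\citet{CR21} give that an $\eps'$-DP exponential mechanism is in fact $\tfrac{\eps'^2}{8}$-CDP. Applying adaptive CDP composition (which simply sums the CDP parameters) across the $k$ rounds yields a $\rho$-CDP guarantee with $\rho=\tfrac{k\eps'^2}{8}$. Finally I would feed this into the generic conversion of Proposition 1.3 of~\citet{BS16}: every $\rho$-CDP mechanism is $\bigl(\rho+2\sqrt{\rho\log(1/\delta)},\,\delta\bigr)$-DP for all $\delta>0$. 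Substituting $\rho=\tfrac{k\eps'^2}{8}$ gives
\[
   \eps \;=\; \frac{k\eps'^2}{8} \;+\; 2\sqrt{\frac{k\eps'^2}{8}\,\log(1/\delta)} \;=\; \frac{k\eps'^2}{8} + 2\eps'\sqrt{\frac{k\log(1/\delta)}{8}},
\]
which is the claimed bound.

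The steps I expect to require the most care are the two that are easy to state but easy to misuse. One is the CDP bound for the exponential mechanism: the entire point of routing through CDP is the factor-$4$ improvement from $\tfrac{\eps'^2}{2}$ to $\tfrac{\eps'^2}{8}$, and this genuinely relies on the bounded-range structure of the exponential mechanism rather than pure DP alone, so I would be careful to invoke the exponential-mechanism-specific statement rather than the generic one. The other is the legitimacy of treating \cpeel{} as $k$ \emph{adaptive} rounds for composition purposes: this is exactly what the \citet{DR19} equivalence buys us, together with the observation that deleting previously selected items leaves the sensitivity of the count utility unchanged. Once those two points are pinned down, the remainder is a mechanical substitution into the cited composition and conversion theorems.
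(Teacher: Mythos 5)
Your proposal follows exactly the chain the paper uses: Lemma 4.2 of \citet{DR19} to rewrite \cpeel{} as $k$ adaptive $\eps'$-DP exponential mechanisms, the $\tfrac{\eps'^2}{8}$-CDP bound from Lemmas 3.2 and 3.4 of \citet{CR21}, additive CDP composition, and the conversion of Proposition 1.3 of \citet{BS16}, with the same final substitution. This matches the paper's (inline) justification of the lemma, and your added remarks on adaptivity and the unchanged sensitivity after item removal are correct refinements rather than departures.
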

All of our approximate DP guarantees for \cpeel{} use \cref{lem:cpeel}.

\subsection{Datasets}
\label{subsec:datasets}
We use six datasets: Books~\cite{S19} (11,000+ Goodreads books with review counts), Foods~\cite{M14} (166,000+ Amazon foods with review counts), Games~\cite{T16} (5,000+ Steam games with purchase counts), Movies~\cite{HK15} (62,000+ Movies with rating counts), News~\cite{FVC15} (40,000+ Mashable articles with share counts), and Tweets~\cite{B17} (52,000+ Tweets with like counts). For each dataset, it is reasonable to assume that one person contributes $\leq 1$ to each count, but may also contribute to many counts. Histograms of item counts appear in \cref{sec:item_count_histograms}. A more relevant quantity here is the gaps between counts of the top $k$ items (\cref{fig:diffs_and_errors}, leftmost column). As we'll see, \joint{} performs best on datasets where gaps are relatively large (Books, Movies, News, and Tweets).

\subsection{Results}
\label{subsec:results}
The experiments evaluate error across the three mechanisms, six datasets, and three error metrics. For each mechanism, the center line plots the median error from 50 trials (padded by 1 to avoid discontinuities on the logarithmic $y$-axis), and the shaded region spans the $25^{th}$ to $75^{th}$ percentiles. We use $k = 5, 15, \ldots, 195$ with 1-DP instances of \joint{} and \ppeel{} and $(1,10^{-6})$-DP instances of \cpeel. Due to the weakness of the $k$-relative error metric, and for the sake of space in the figure, we relegate its discussion to \cref{sec:k_relative_errors}.

\arxiv{\begin{figure*}[p]
    \centering
    \includegraphics[scale=0.34,trim={0 -1.1cm 0 0},clip]{images/books_diffs.png}
    \includegraphics[scale=0.34]{images/books_linf.png}
    \includegraphics[scale=0.34]{images/books_l1.png} \\ \vspace{-10pt}
    \includegraphics[scale=0.34,trim={0 -1.1cm 0 0},clip]{images/foods_diffs.png}
    \includegraphics[scale=0.34]{images/foods_linf.png}
    \includegraphics[scale=0.34]{images/foods_l1.png} \\ \vspace{-10pt}
    \includegraphics[scale=0.34,trim={0 -1.1cm 0 0},clip]{images/games_diffs.png}
    \includegraphics[scale=0.34]{images/games_linf.png}
    \includegraphics[scale=0.34]{images/games_l1.png} \\ \vspace{-10pt}
    \includegraphics[scale=0.34,trim={0 -1.1cm 0 0},clip]{images/movies_diffs.png}
    \includegraphics[scale=0.34]{images/movies_linf.png}
    \includegraphics[scale=0.34]{images/movies_l1.png} \\ \vspace{-10pt}
    \includegraphics[scale=0.34,trim={0 -1.1cm 0 0},clip]{images/news_diffs.png}
    \includegraphics[scale=0.34]{images/news_linf.png}
    \includegraphics[scale=0.34]{images/news_l1.png} \\ \vspace{-10pt}
    \includegraphics[scale=0.34,trim={0 -1.1cm 0 0},clip]{images/tweets_diffs.png}
    \includegraphics[scale=0.34]{images/tweets_linf.png}
    \includegraphics[scale=0.34]{images/tweets_l1.png} \\ \vspace{-10pt}
    \includegraphics[scale=0.43]{images/some_labels.png}
    \caption{Note that all plots have a logarithmic $y$-axis and quantities are padded by 1 to avoid discontinuities on the logarithmic $y$-axis.  \textbf{Left column}: Count differences $c_k - c_{k+1}$ for each dataset.  \textbf{Center column}: $\ell_\infty$ error.  \textbf{Right column}: $\ell_1$ error.}
    \label{fig:diffs_and_errors}
\end{figure*}}

\subsubsection{$\ell_\infty$ error}
\joint's performance is strongest for $\ell_\infty$ error (\cref{fig:diffs_and_errors}, center column). This effect is particularly pronounced on the Books, Movies, News, and Tweets datasets. This is because these datasets have large gaps between the top $k$ counts (\cref{fig:diffs_and_errors}, leftmost column), which results in large gaps between the scores that \joint{} assigns to optimal and suboptimal sequences. These large gaps enable \joint{} to obtain much stronger performance than the baseline pure DP algorithm, \ppeel, and to beat even the approximate DP \cpeel{} for a wide range of $k$. In contrast, small gaps reduce this effect on Foods and Games. On these datasets, \joint{} slightly improves on \ppeel{} overall, and only improves on \cpeel{} for roughly $k \leq 30$.

\narxiv{\begin{figure}[h!]
    \centering
    \begin{subfigure}{0.23\textwidth}
        \includegraphics[scale=0.27]{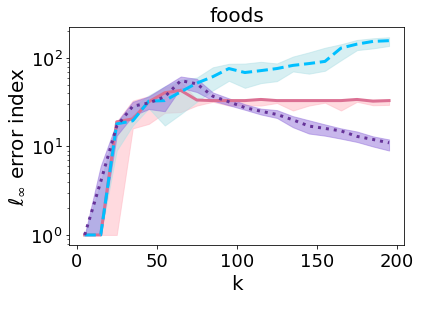} 
    \end{subfigure}
    \vspace{-10pt}
    \begin{subfigure}{0.23\textwidth}
        \includegraphics[scale=0.27]{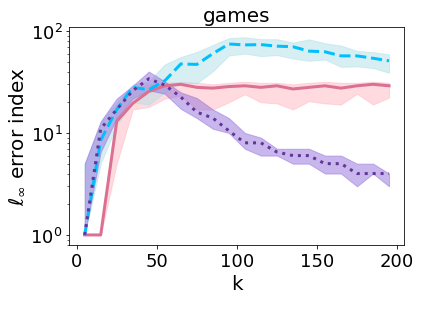}
    \end{subfigure}
    \vspace{-5pt}
    \begin{subfigure}{0.46\textwidth}
        \centering \includegraphics[scale=0.33]{images/some_labels.png}
    \end{subfigure}
    \caption{$\ell_\infty$ index error plots. Note the logarithmic $y$-axis.}
    \label{fig:linf_idx_error}
\end{figure}}

\arxiv{\begin{figure}[h!]
    \centering
    \begin{subfigure}{0.4\textwidth}
        \includegraphics[scale=0.45]{images/foods_linf_idx.png} 
    \end{subfigure}
    \begin{subfigure}{0.4\textwidth}
        \includegraphics[scale=0.45]{images/games_linf_idx.png}
    \end{subfigure}
    \begin{subfigure}{0.8\textwidth}
        \centering \includegraphics[scale=0.45]{images/some_labels.png}
    \end{subfigure}
    \caption{$\ell_\infty$ index error plots. Note the logarithmic $y$-axis.}
    \label{fig:linf_idx_error}
\end{figure}}

The $\ell_{\infty}$ metric also features plateaus in \joint's error on Foods and Games. This is because \joint's error is gap-dependent while \ppeel{} and \cpeel's errors are more $k$-dependent: as $k$ grows, \joint's maximum error may change as it ranks more items, but the item index where that error occurs changes monotonically. The reason is that \joint's error ultimately depends on the count gaps under consideration. In contrast, the item index where \ppeel{} and \cpeel{} incur maximum error may increase and then decrease. This is because \ppeel{} and (to a lesser extent) \cpeel{} must divide their privacy budget by $k$, and thus are increasingly likely to err (and incur the larger penalties for) top items as $k$ becomes large. \cref{fig:linf_idx_error} plots the maximum error item index and illustrates this effect.

\subsubsection{$\ell_1$ error}
A similar trend holds for $\ell_1$ error (\cref{fig:diffs_and_errors}, rightmost column). \joint{} again largely obtains the best performance for the Books, Movies, News, and Tweets datasets, with relatively worse error on Foods and Games. $\ell_1$ error is a slightly more awkward fit for \joint{} because \joint's utility function relies on maximum count differences; \joint{} thus applies the same score to sequences where a single item count has error $c$ and sequences where every item count has error $c$. This means that \joint{} selects sequences that have relatively low maximum (and $\ell_\infty$) error but may have high $\ell_1$ error. Nonetheless, we again see that \joint{} always obtains the strongest performance for small $k$; it matches \ppeel{} for small datasets and outperforms it for large ones; and it often outperforms \cpeel{}, particularly for large datasets and moderate $k$.

\subsubsection{Time comparison}
We conclude with a time comparison using the largest dataset (Foods, $d \approx 166,000$) and 5 trials for each $k$. \ppeel{} uses $k$ instances of the permute-and-flip mechanism for an overall runtime of $O(dk)$. \cpeel{}'s runtime is dominated by finding the top-$k$ values from a set of $d$ unordered values, which can be done in time $O(d + k\log(k))$. As seen in \cref{fig:time}, and as expected from their asymptotic runtimes, \joint{} is slower than \ppeel, and \ppeel{} is slower than \cpeel. Nonetheless, \joint{} still primarily runs in seconds or, for $k = 200$, slightly over 1 minute.

\narxiv{\begin{figure}[h]
    \begin{subfigure}{0.46\textwidth}
        \centering \includegraphics[scale=0.33]{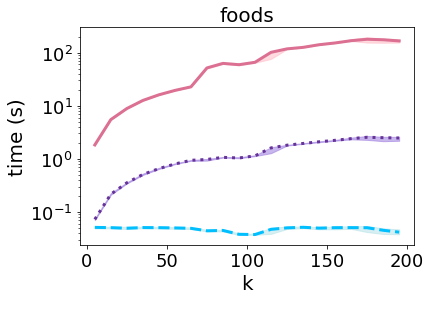}
    \end{subfigure}
    \begin{subfigure}{0.46\textwidth}
    \vspace{-12pt}
        \centering \includegraphics[scale=0.3]{images/some_labels.png}
    \end{subfigure}
    \vspace{-10pt}
    \caption{Execution time on a logarithmic $y$-axis.}
    \label{fig:time}
\end{figure}}
\arxiv{\begin{figure}[h]
    \begin{subfigure}{\textwidth}
        \centering \includegraphics[scale=0.45]{images/foods_time}
    \end{subfigure} \\
    \begin{subfigure}{\textwidth}
        \centering \includegraphics[scale=0.4]{images/some_labels.png}
    \end{subfigure}
    \caption{Execution time on a logarithmic $y$-axis.}
    \label{fig:time}
\end{figure}}
\section{Conclusion}
\label{sec:conclusion}

We defined a joint exponential mechanism for the problem of differentially private top-$k$ selection and derived an algorithm for efficiently sampling from its distribution.  We provided code and experiments demonstrating that our approach almost always improves on existing pure DP methods and often improves on existing approximate DP methods when $k$ is not large. We focused on the standard setting where an individual user can contribute to all item counts.  However, if users are restricted to contributing to a single item, then algorithms that modify item counts via Laplace noise~\cite{DWZK19, QSZ21} are superior to \joint{} and peeling mechanisms.  The best approach for the case where users can contribute to some number of items larger than $1$ but less than $d$ is potentially a topic for future work.

\section*{Acknowledgements}
We thank Ryan Rogers for helpful discussion of the peeling mechanism.

\newpage

\bibliography{arxiv_main}
\bibliographystyle{plainnat}

\appendix
\onecolumn

\section{Proof of \joint{} Utility Guarantee (\cref{thm:main_utility})}
\label{sec:utility-proof}

\begin{proof}[Proof of \cref{thm:main_utility}]
The following is a basic utility guarantee for the exponential mechanism.
\begin{lemma}[\citet{MT07, DR14}]
    Let $\calA(u,D)$ be the utility value produced by an instance of the exponential mechanism with score function $u$, output space $R$, dataset $D$, and optimal utility value $OPT_u(D)$. Then 
    \[
        \P{}{\calA(u,D) \leq OPT_u(D) - \frac{2\Delta(u)}{\eps}[\ln(|R|) + t]} \leq e^{-t}. 
    \]
\end{lemma}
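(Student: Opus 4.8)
The plan is to run the standard counting argument for the exponential mechanism, in which a single optimal term already dominates the normalizing constant. Write \(\alpha = \eps/(2\Delta(u))\), so that by \cref{def:exp} the mechanism \(M\) underlying \(\calA\) outputs \(o \in R\) with probability \(\exp(\alpha\,u(D,o))/Z\), where \(Z = \sum_{o' \in R}\exp(\alpha\,u(D,o'))\). Abbreviate \(OPT = OPT_u(D)\) and set the deviation \(s = \frac{2\Delta(u)}{\eps}\big[\ln(|R|)+t\big]\); the point of this choice is that \(\alpha s = \ln(|R|)+t\) exactly. The goal is then \(\P{}{\calA(u,D) \le OPT - s} \le e^{-t}\).

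First I would lower-bound the normalizer: since the sum defining \(Z\) contains the term for some optimal output \(o^\star\) with \(u(D,o^\star) = OPT\), we get \(Z \ge \exp(\alpha\,OPT)\). Next I would upper-bound the mass on low-utility outputs: letting \(B = \{o' \in R : u(D,o') \le OPT - s\}\), every \(o' \in B\) contributes at most \(\exp(\alpha(OPT-s))\) to the numerator and \(|B| \le |R|\), so \(\P{}{\calA(u,D) \le OPT - s} = \sum_{o' \in B}\exp(\alpha\,u(D,o'))/Z \le |R|\exp(\alpha(OPT-s))/\exp(\alpha\,OPT) = |R|\exp(-\alpha s)\). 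Substituting \(\alpha s = \ln(|R|)+t\) collapses this to \(|R|\cdot|R|^{-1}\cdot e^{-t} = e^{-t}\), which is the claim.

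I do not expect a real obstacle: this is a short calculation once the normalization trick is in place. The only points worth a word of care are that \(R\) is finite and nonempty so that \(Z\) is a positive real and the division is legitimate; that when \(OPT - s\) falls below the least attainable utility the set \(B\) is empty and the bound holds trivially (it is likewise vacuous whenever \(t < 0\) makes \(e^{-t} > 1\)); and that the factor \(2\) inside \(\alpha\) must match the factor \(2\) in \cref{def:exp} for the exponents to cancel. To close out the surrounding proof of \cref{thm:main_utility}, I would instantiate this lemma with \(u = u^*\), sensitivity \(\Delta(u^*) = 1\) from \cref{lem:sensitivity}, \(OPT_{u^*}(D) = 0\) (attained by the true top-\(k\) sequence), \(|R| \le d^k\), and \(t = 5\); since \(\calA(u^*,D) = -\max_{i\in[k]}(c_i - \tilde c_i)\) and \(e^{-5} < 1/100\), the event \(\calA(u^*,D) > -\tfrac{2}{\eps}[k\ln(d) + 5]\) has probability at least \(99/100\), which after negating is exactly the stated bound.
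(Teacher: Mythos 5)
Your proof is correct and is exactly the standard counting argument from the cited sources; the paper itself only states this lemma with a citation to \citet{MT07, DR14} and gives no proof, so there is nothing for your argument to diverge from. Your closing instantiation for \cref{thm:main_utility} ($u = u^*$, $\Delta(u^*)=1$, $OPT_{u^*}(D)=0$, $|R|\le d^k$, $t=5$, $e^{-5}<1/100$) also matches what the paper does.
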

Taking $t = 5$, and using the fact that $|R| \leq d^k$ for \joint 's utility function $u^*$, completes the result.
\end{proof}

\section{Full Privacy, Runtime, and Storage Space Proof For \joint{} (\cref{thm:main})}
\label{sec:privacy-proof}

\begin{proof}[Proof of \cref{thm:main}]
Recall that \joint{} refers to the algorithm that uses the efficient sampling mechanism. Here, we first prove that \joint{} samples a sequence from the exponential mechanism with utility $u^*$.

Let \emech{} refer to the naive original construction of the exponential mechanism with utility $u^*$. It suffices to show that \joint{} and \emech{} have identical output distributions. Fix some sequence $S = (s_1, \ldots, s_k)$ of indices from $[d]$.

If $s_1, \ldots, s_k$ are not distinct, then \joint{} never outputs $S$. This agrees with the original definition of the exponential mechanism with utility function $u^*$, which assigns score $-\infty$ to any sequence of item indices with repetitions. Thus, for any $S$ with non-distinct elements, $\P{\joint}{\text{output } S} = \P{\emech}{\text{output } S} = 0$.

If instead $s_1, \ldots, s_k$ are distinct, by \cref{lem:counting_m}, $\tm(S) > 0$. Let $\tu_{i^*j^*} = \min_{i \in [k]} -(c_i - c_{s_i}) - z_{is_i}$ be its score in $\tu$, so $i^* = \arg \min_{i \in [k]} -(c_i - c_{s_i}) - z_{is_i}$. Let $U_Z = \{-(c_i - c_j)\}_{i \in [k], j \in [d]}$ denote the set of possible values for $-(c_i - c_j)$; note that this a set of integers and does not have repeated elements. Then
\begin{align*}
    \P{\joint}{\text{output } S} =&\ \P{\joint}{\text{sample score } \tu_{i^*j^*}} \cdot \P{\joint}{\text{sample sequence } S \mid \text{sample score } \tu_{i^*j^*}} \\
    =&\ \frac{\tm(\tu_{i^*j^*}) \exp \left(\frac{\eps \lceil \tu_{i^*j^*} \rceil}{2}\right)}{\sum_{u \in \tu} \tm(u) \exp \left(\frac{\eps \lceil u \rceil}{2}\right)} \cdot \prod_{r \neq i^*} \frac{1}{t_r(\tu_{i^*j^*}) - (r - 1)} \\
    =&\ \frac{\exp \left(\frac{\eps \lceil \tu_{i^*j^*} \rceil}{2}\right)}{\sum_{u \in \tu} \tm(u) \exp \left(\frac{\eps \lceil u \rceil}{2}\right)}
\end{align*}
by \cref{lem:counting_m}. Then we continue the chain of equalities as 
\begin{align*}
    \frac{\exp \left(\frac{\eps \lceil \tu_{i^*j^*} \rceil}{2}\right)}{\sum_{u \in \tu} \tm(u) \exp \left(\frac{\eps \lceil u \rceil}{2}\right)} =&\ \frac{\exp \left(\frac{\eps \lceil \tu_{i^*j^*} \rceil}{2}\right)}{\sum_{A_{ij}} \sum_{u \in A_{ij}} \tm(u) \exp \left(\frac{\eps \lceil u \rceil}{2}\right)} \\
    =&\ \frac{\exp \left(\frac{\eps \lceil \tu_{i^*j^*} \rceil}{2}\right)}{\sum_{u \in U_Z} m(u) \exp \left(\frac{\eps u}{2}\right)} \\
    =&\ \frac{m(\lceil \tu_{i^*j^*} \rceil) \exp \left(\frac{\eps \lceil \tu_{i^*j^*} \rceil}{2}\right)}{\sum_{u \in U_Z} m(u) \exp \left(\frac{\eps u}{2}\right)} \cdot \frac{1}{m(\lceil \tu_{i^*j^*} \rceil)} \\
    =&\ \P{\emech}{\text{sample score } \lceil \tu_{i^*j^*} \rceil} \cdot \P{\emech}{\text{sample sequence } S \mid \text{sample score } \lceil \tu_{i^*j^*} \rceil} \\
    =&\ \P{\emech}{\text{output } S}
\end{align*}
where the second equality uses \cref{lem:ms}.

Having established the privacy of \joint{}, we now turn to proving that its runtime and storage space costs are $O(dk\log(k) + d\log(d))$ and $O(dk)$, respectively.

Referring to \cref{alg:main}, line~\ref{algln:sort_counts} takes time $O(d\log(d))$ and space $O(d)$. Line~\ref{algln:build_matrix} takes time and space $O(dk)$. Line~\ref{algln:sort_matrix} takes time $O(dk\log(k))$ and space $O(dk)$; since each row of $U$ is already decreasing, we can use $k$-way merging~\cite{K97} instead of naive sorting.

The loop on Line~\ref{algln:fill_ns} handles the $\tm$ that are zero.  Its variable setup on Lines~\ref{algln:init_ns}-\ref{algln:init_b} takes time and space $O(k)$.  Lines internal to the loop each take $O(1)$ time and space.  So, overall, this block of code requires time and space $O(dk)$.
    
The loop on Line~\ref{algln:loop_a} handles the non-zero $\tm$.  Its variable setup on Lines~\ref{algln:first_p}-\ref{algln:first_tm} takes time $O(k)$ and space $O(1)$.  Lines internal to the loop each take $O(1)$ time and space. So, overall, this block of code requires time and space $O(dk)$.
    
Sampling a utility (Line~\ref{algln:sample_utility}) requires time and space $O(dk)$.  The remaining loop (Line~\ref{algln:loop_output}) iterates for $O(k)$ steps, and each step requires $O(d)$ time and space.

Overall, this yields runtime and storage space costs of $O(dk\log(k) + d\log(d))$ and $O(dk)$, respectively.
\end{proof}

\section{Other Experiment Plots}
\label{sec:experiment_other}

\subsection{Item Count Histograms}
\label{sec:item_count_histograms}

\cref{fig:mhistograms} contains item count histograms for each of the datasets.

\begin{figure}[h!]
    \centering
    \begin{subfigure}{0.3\textwidth}
        \includegraphics[scale=0.35]{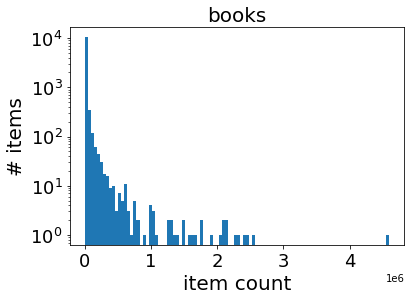}
    \end{subfigure}
    \begin{subfigure}{0.3\textwidth}
        \includegraphics[scale=0.35]{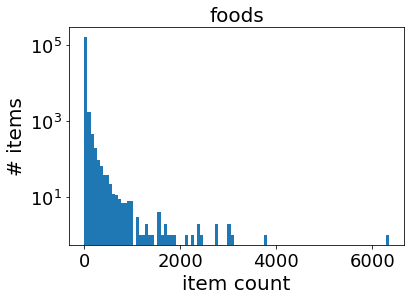}
    \end{subfigure}
    \begin{subfigure}{0.3\textwidth}
        \includegraphics[scale=0.35]{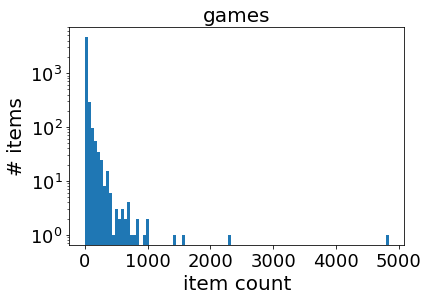}
    \end{subfigure}
        \begin{subfigure}{0.3\textwidth}
        \includegraphics[scale=0.35]{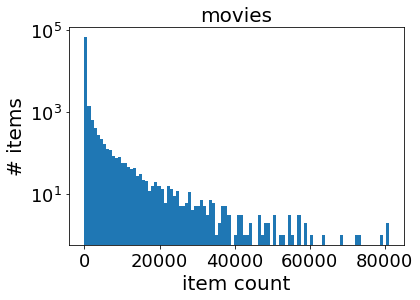}
    \end{subfigure}
        \begin{subfigure}{0.3\textwidth}
        \includegraphics[scale=0.35]{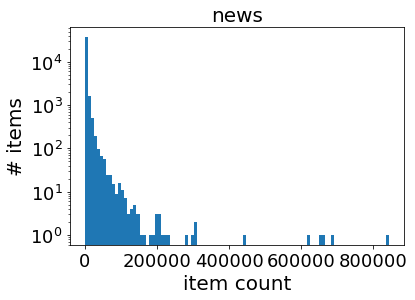}
    \end{subfigure}
        \begin{subfigure}{0.3\textwidth}
        \includegraphics[scale=0.35]{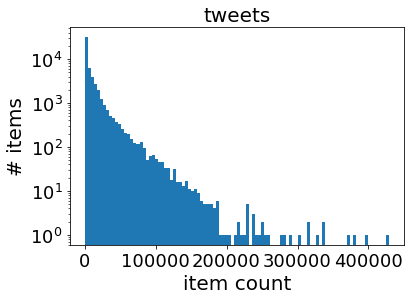}
    \end{subfigure}
    \caption{Item count histograms. The $x$-axis is binned by item count, and the $y$-axis is the number of items in each bin.}
    \label{fig:mhistograms}
\end{figure}

\subsection{$k$-Relative Errors}
\label{sec:k_relative_errors}

\cref{fig:krel_error} plots $k$-relative error for each of the mechanisms and datasets. The trends for $k$-relative error are broadly unchanged from $\ell_\infty$ and $\ell_1$ error: \joint{} consistently matches or outperforms its pure DP counterpart \ppeel, mostly outperforms \cpeel{} on large-scale datasets, and is mostly outperformed by \cpeel{} on small-scale datasets unless $k$ is small. However, $k$-relative error is the least sensitive error (see discussion after \cref{def:errors}), so for several datasets the performance gaps between methods are small or zero.

\begin{figure}[h!]
    \centering
    \begin{subfigure}{0.3\textwidth}
        \includegraphics[scale=0.33]{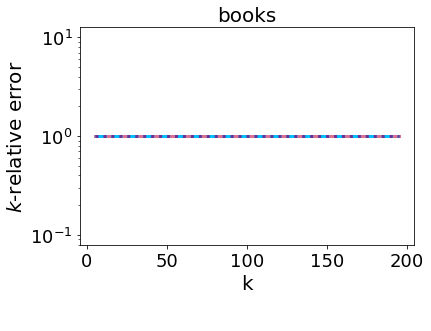}
    \end{subfigure}
    \begin{subfigure}{0.3\textwidth}
        \includegraphics[scale=0.33]{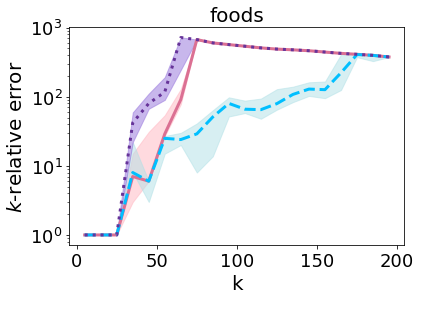} 
    \end{subfigure}
    \begin{subfigure}{0.3\textwidth}
        \includegraphics[scale=0.33]{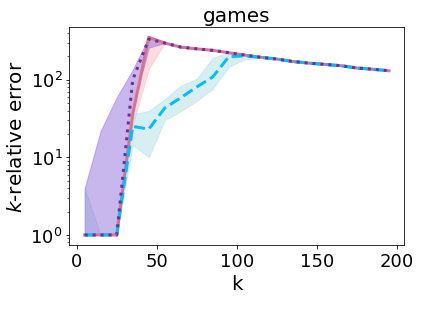}
    \end{subfigure}
    \begin{subfigure}{0.3\textwidth}
        \includegraphics[scale=0.33]{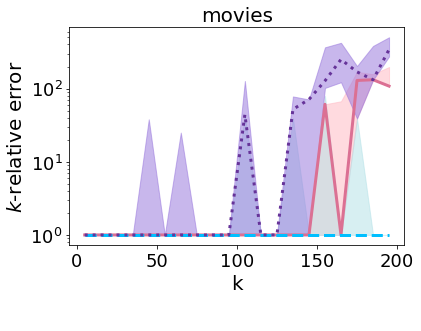}
    \end{subfigure}
    \begin{subfigure}{0.3\textwidth}
        \includegraphics[scale=0.33]{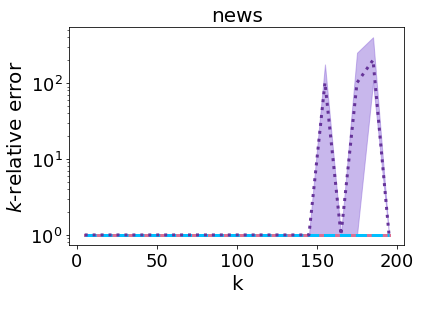}
    \end{subfigure}
    \vspace{-10pt}
    \begin{subfigure}{0.3\textwidth}
        \includegraphics[scale=0.35]{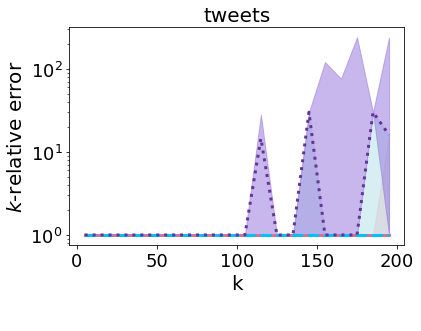}
    \end{subfigure}
    \begin{subfigure}{0.46\textwidth}
        \centering \includegraphics[scale=0.4]{images/some_labels.png}
    \end{subfigure}
    \caption{$k$-relative error plots. Note the logarithmic $y$-axis.}
    \label{fig:krel_error}
\end{figure}

\subsection{Gamma and Laplace Mechanisms}
\label{sec:gamma_and_laplace}

\cref{fig:gamma_laplace} plots error on the Movies dataset for the core three methods (\joint{}, \ppeel{}, and \cpeel{}) as well as Gamma (Theorem 4.1, \cite{SU15}) and Laplace~\cite{DWZK19} mechanisms, which are respectively dominated by \ppeel{} and \cpeel{}.  Exact details of these mechanisms can be found in the code provided in the supplement.  Note that the Laplace mechanism of \citet{QSZ21} is identical to that of \citet{DWZK19} at the tested value of $\epsilon = 1$.  For $\epsilon < 0.1$, \citet{QSZ21} also provides an approximate-DP version of the algorithm, which may be more competitive in that setting.

\begin{figure}[h!]
    \centering
    \includegraphics[scale=0.35]{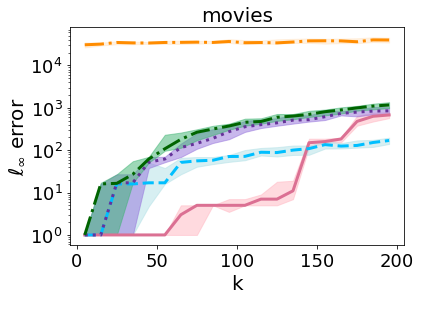}
    \includegraphics[scale=0.35]{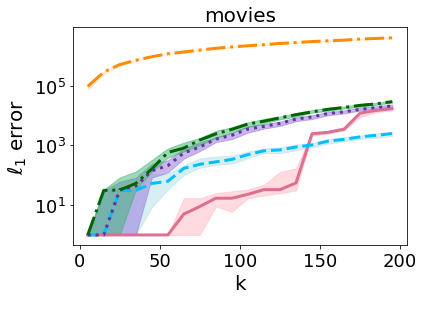}
    \includegraphics[scale=0.35]{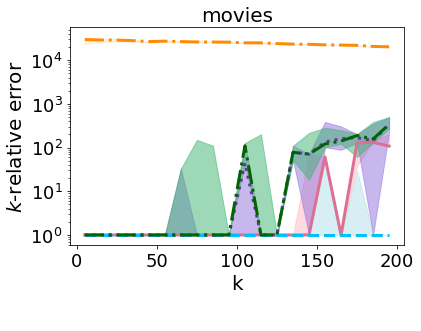}
    \centering \includegraphics[scale=0.35]{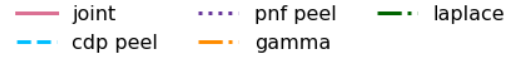}
    \caption{Plots for all error metrics on the Movies dataset, with Gamma and Laplace mechanisms included as additional baselines. Note the logarithmic $y$-axis.}
    \label{fig:gamma_laplace}
\end{figure}

\arxiv{
\section{Permute-and-Flip Version of \joint}
\label{sec:pnf}

The permute-and-flip mechanism of~\citet{MS20}, which is equivalent to report-noisy-max with exponential noise~\citep{DKSSWXZ21}, strictly dominates the exponential mechanism in terms of expected utility.  In theory, the utility gain can be as large as a factor of 2.  In this section, we show how a minor modification to \joint{} can transform it from an exponential mechanism into a permute-and-flip mechanism.  However, we then show that the empirical performance of this alternative mechanism is not significantly different than that of the exponential mechanism for any of the datasets considered in this work.

\subsection{Sketch of the \pnfjoint{} algorithm}

We derive the modification to the joint exponential mechanism based on the report-noisy-max with exponential noise formulation of permute-and-flip.  We will call the resulting algorithm \pnfjoint. 
 \cref{alg:rnmexpo} recalls the generic version of report-noisy-max with exponential noise as given in~\citet{DKSSWXZ21}.

\begin{figure}[H]
    \begin{algorithm}[H]
    \begin{algorithmic}[1]
       \STATE {\bfseries Input:} Dataset $D$, utility $u$ with $\ell_1$ sensitivity $\Delta(u)$, set of outputs $O$, privacy parameter $\eps$ \alglinelabel{algln:rnmexpo_in}
       \FOR{$o \in O$} \alglinelabel{algln:rmnexpo_loop_outputs}
           \STATE $v_o \gets u(D, o) + \mathrm{Expo}\left(\frac{\epsilon}{2 \Delta(u)}\right)$
       \ENDFOR
       \STATE {\bfseries Output:} $\argmax_{o \in O} v_o$ \alglinelabel{algln:rnmexpo_out}
    \end{algorithmic}
    \caption{\rnmexpo, \citet{DKSSWXZ21}}
    \label{alg:rnmexpo}
    \end{algorithm}
\end{figure}

We cannot typically afford to naively run \rnmexpo{} as written, looping over all outputs, since the set of outputs $O$ is exponentially large for the joint top-$k$ problem ($O(d^k)$).  Yet, if we use the utility function $u^*$ defined in \cref{sec:joint}, recall that there are then only $O(dk)$ unique utility values.  As with the exponential mechanism, we can leverage this fact to yield an efficient sampling algorithm.

More specifically, we can simply switch out the utility-sampling step of \joint{}, which applies \cref{eq:alt_utility_distribution} (see Line~\ref{algln:sample_utility} of \cref{alg:main}).  To see this, recall that \joint{} first computes $dk$ counts: the count $m(U_{(a)})$ represents the number of outputs whose utility is $U_{(a)}$.  Then, a specific utility value is selected according to \cref{eq:alt_utility_distribution}.  Finally, a random sequence with this utility is returned.  To convert from an exponential mechanism to report-noisy-max with exponential noise, we simply need to replace the utility value selection step.  That is, we need to select a utility value $U_{(a)}$ proportional to the probability that one of its sequences will have the maximizing noisy value on Line~\ref{algln:rnmexpo_out} of \cref{alg:rnmexpo}.  The utility value $U_{(a)}$ has $m(U_{(a)})$ chances to be the maximizer.  If we compute the maximum over $m(U_{(a)})$ draws from the exponential distribution $\mathrm{Expo}(\epsilon / (2 \Delta(u)))$, then this tells us the max $v_o$ value of any output $o$ with utility $U_{(a)}$.  Hence, to decide whether a sequence with utility $U_{(a)}$ will be output by \cref{alg:rnmexpo}, we just need to know what this maximum value is.  This implies that, rather than actually having to draw from the exponential distribution $m(U_{(a)})$ times (which would be impractical given that $m$ can be $O(d^k)$), we can draw once from the distribution of the maximum.  We will call this distribution \maxexpo.

\cref{alg:pnfjoint} summarizes this process.  Note that, assuming \maxexpo{} can be sampled in $O(1)$ time and space, this algorithm runs in $O(dk)$ time and $O(1)$ space.  Hence, the overall \pnfjoint{} algorithm to draw a top-$k$ sequence has the same time and space complexity as the \joint{} exponential mechanism.

\begin{figure}[H]
    \begin{algorithm}[H]
    \begin{algorithmic}[1]
       \STATE {\bfseries Input:} Unique utility values $\{U_{(1)}, \ldots U_{(dk)}\}$, counts of the number of sequences with each utility value $\{m(U_{1}), \ldots, m(U_{(dk)}\}$, privacy parameter $\eps$ \alglinelabel{algln:input}
       \FOR{$a \in [dk]$} \alglinelabel{algln:loop_outputs}
           \STATE $v_a \gets U_{(a)} + \maxexpo\left(\epsilon/2, m(U_{(a)})\right)$
       \ENDFOR
       \STATE {\bfseries Output:} $\argmax_{a \in [dk]} v_a$ \alglinelabel{algln:rnmexpo_out}
    \end{algorithmic}
    \caption{\pnfjoint{} utility selection component}
    \label{alg:pnfjoint}
    \end{algorithm}
\end{figure}

The $\maxexpo$ distribution has a simple CDF that can be derived from that of the exponential distribution.  Specifically, suppose that we want to know the maximum over $m$ draws from $\mathrm{Expo}(\epsilon / (2 \Delta(u)))$.  Letting $X_i$ represent a draw from this $\mathrm{Expo}$, we have:
\begin{equation}
    F(z) \coloneqq P\left(\max_{i \in [m]} X_i \leq z \right) = \prod_{i \in [m]} P(X_i \leq z) = \left(1 - \exp\left(-\frac{\epsilon z}{2 \Delta(u)}\right)\right)^m
\end{equation}
where the first equality follows from the fact that the draws are independent, and the second from the definition of the CDF of the exponential distribution.  Inverting this CDF, we have:
\begin{equation}
    \label{eq:inv_cdf}
    F^{-1}(p) = -\frac{2 \Delta(u)}{\epsilon} \ln\left(1 - p^{1/m}\right).
\end{equation}
Drawing a sample from the uniform distribution on $[0, 1]$ and plugging it in for $p$ above produces a sample $z$ from the desired distribution.

\subsection{Experiments with \pnfjoint}

The code implementing \pnfjoint{} is publicly available~\citep{Go22}.  We note though that this implementation is not as numerically stable as that of \joint{}, so it cannot handle the largest values of $k$ used in our other experiments.  The main difference is that, while \joint{} can be implemented using only the logs of the $m$ values, \pnfjoint{} needs the actual $m$ (or $1/m$) values for the \maxexpo{} inverse CDF (\cref{eq:inv_cdf}).  If $m$ is too large to be stored in a double, or if $1/m$ is so small that its double representation is zero, then this poses an implementation challenge\footnote{In fact, even using log counts (stored as doubles) in \joint{} is technically numerically imperfect, as low-order bits of the counts may be lost.  However, these should be negligible compared to the magnitude of the overall count.  To patch this imperfection, \joint{} could instead be implemented to store exact counts in an arbitrary-precision integer data type.}.

\cref{fig:pnfjoint_experiments} shows empirical results.  For each mechanism, the center line plots the median error from 50 trials (padded by 1 to avoid discontinuities on the logarithmic $y$-axis), and the shaded region spans the $25^{th}$ to $75^{th}$ percentiles.  Only the low-$k$ regime is assessed, as this is where the implementation of \pnfjoint{} is numerically stable.  In this regime, for three of the datsets all methods have essentially zero error.  For the other three datasets, the results for \joint{} and \pnfjoint{} are almost indistinguishable.

\begin{figure*}[ht]
    \centering
    \includegraphics[scale=0.34,trim={2cm 2cm 2cm 0},clip]{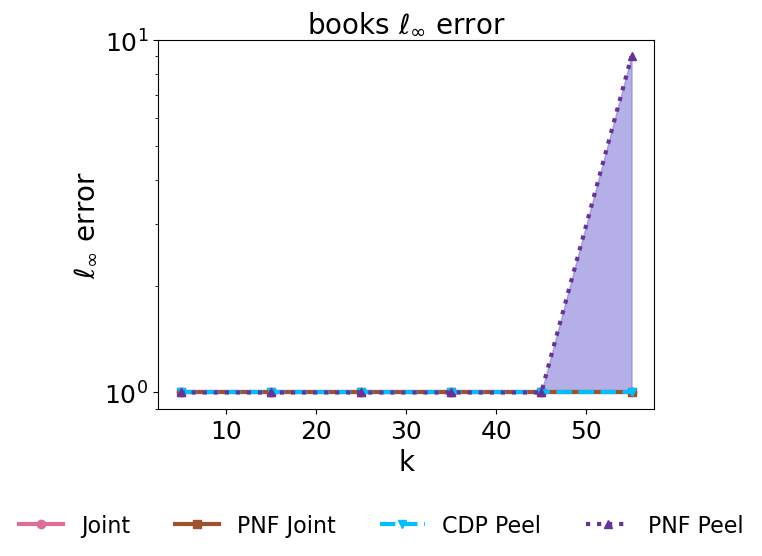}
    \includegraphics[scale=0.34,trim={2cm 2cm 2cm 0},clip]{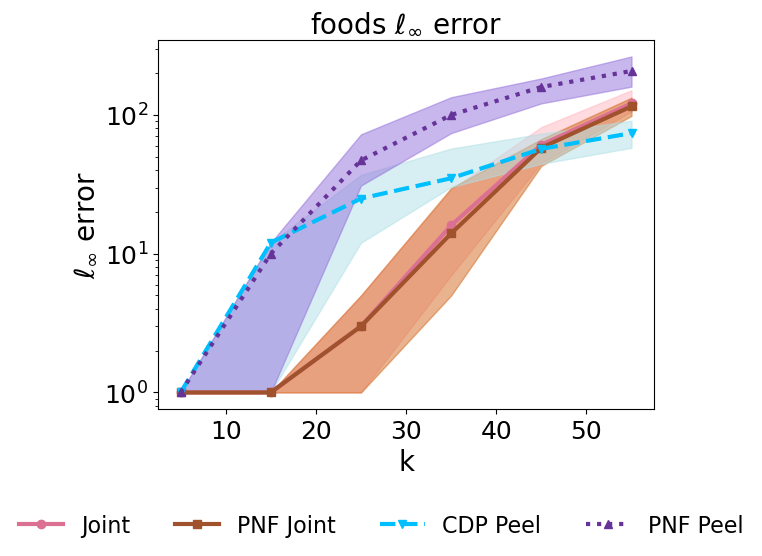}
    \includegraphics[scale=0.34,trim={2cm 2cm 2cm 0},clip]{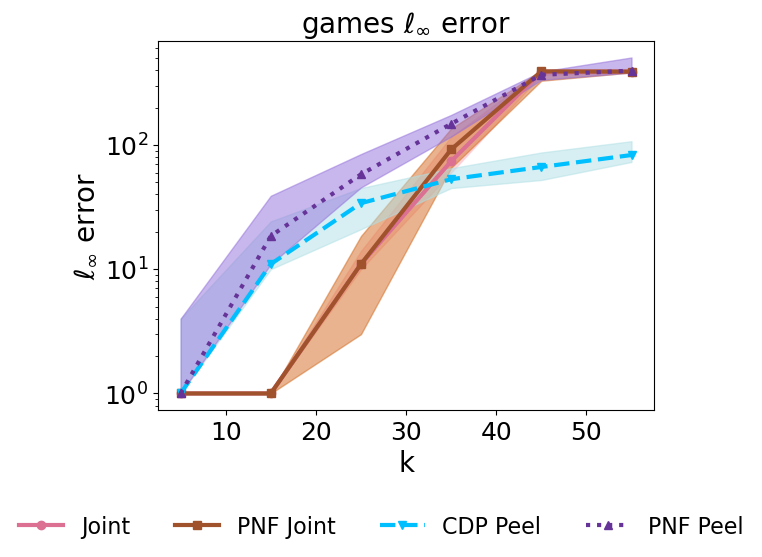} \\ \vspace{+10pt}
    \includegraphics[scale=0.34,trim={2cm 2cm 2cm 0},clip]{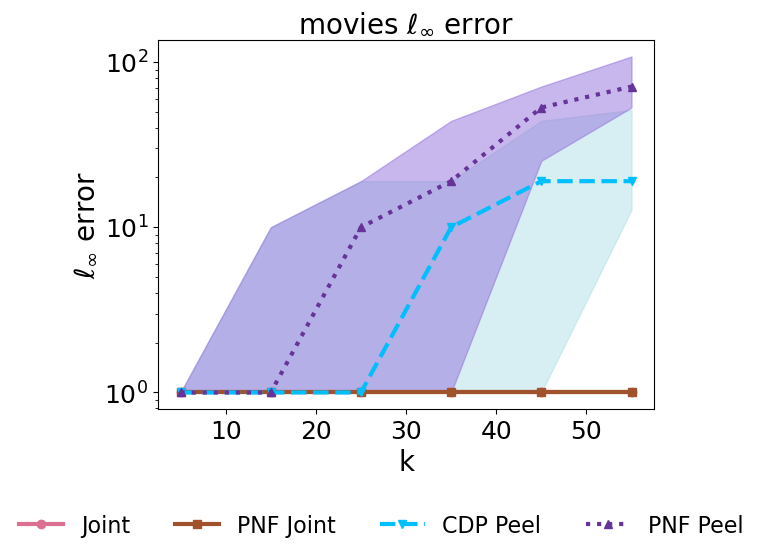}
    \includegraphics[scale=0.34,trim={2cm 2cm 2cm 0},clip]{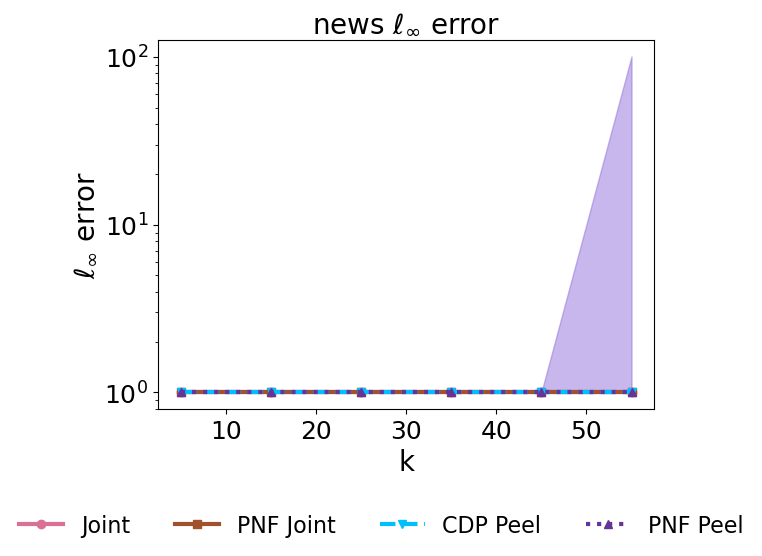}
    \includegraphics[scale=0.34,trim={2cm 2cm 2cm 0},clip]{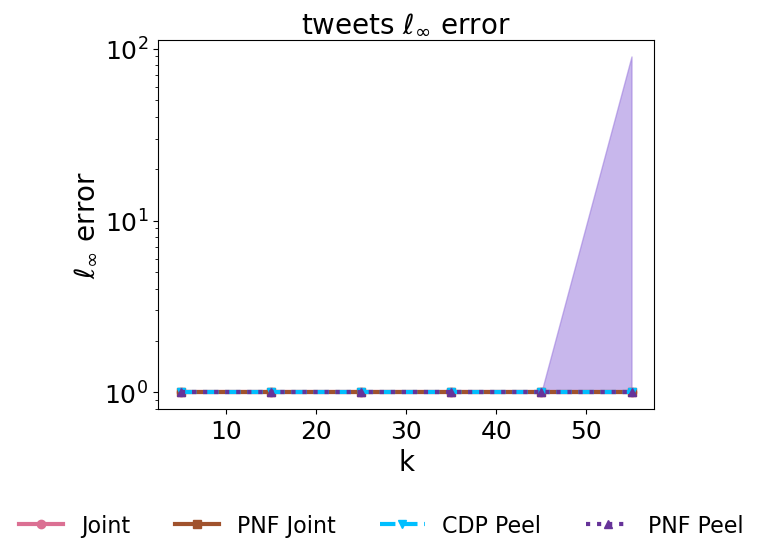} \\
    \includegraphics[scale=0.43]{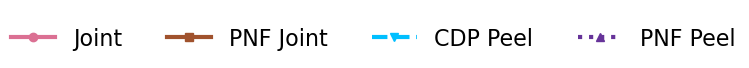}
    \caption{$\ell_\infty$ error for $k \leq 45$.  Note the logarithmic $y$-axis.}
    \label{fig:pnfjoint_experiments}
\end{figure*}

\subsection{Discussion}

The indistinguishably of \joint{} and \pnfjoint{} for these real-world datasets is perhaps to be expected.  Despite the fact that \pnfjoint{} can in theory have utility twice that of \joint{} for some applications, there tends not to be much difference when the output space is large and there are many high-utility outputs.

To understand this better, it helps to consider Algorithm 2 of the original permute-and-flip work~\citep{MS20}.  Their Algorithm 2 is reproduced here as \cref{alg:exp_mech_as_rejection_sampling}.  This algorithm is an implementation of the exponential mechanism, written as rejection sampling.  It is completely identical to the permute-and-flip mechanism, except that permute-and-flip adds one additional line inside the loop that removes output $o$ from consideration once it has been rejected once: $O \gets O \setminus {o}$.  Hence, the exponential mechanism can be thought of as identical to the permute-and-flip mechanism, except that the former samples from the space of outputs \emph{with} replacement, while the latter samples \emph{without} replacement.  If the space of outputs is very large, and a nontrivial number of outputs have relatively high probability, then sampling with replacement is not going to be much different from sampling without replacement.  That is, the chance that an output will be re-sampled by the exponential mechanism is low.  In our joint top-$k$ formulation, the output space is indeed very large: $O(d^k)$, with $d$ ranging from $\sim$$5{,}000$ to $\sim$$166{,}000$ on the six datasets tested.

\begin{figure}[H]
    \begin{algorithm}[H]
    \begin{algorithmic}[1]
       \STATE {\bfseries Input:} Dataset D, utility $u$ with $\ell_1$ sensitivity $\Delta(u)$, set of outputs $O$, privacy parameter $\eps$
       \STATE $q^* \gets \max_{o \in O} u(D, o)$
       \REPEAT{}
           \STATE $o \sim \mathrm{Uniform}[O]$
           \STATE $p_o \gets \exp\left(\frac{\epsilon}{2\Delta(u)} (u(D, o) - q^*)\right)$
       \UNTIL{$\mathrm{Bernoulli}(p_o)$}
       \STATE {\bfseries Output:} $o$
    \end{algorithmic}
    \caption{Exponential mechanism as rejection sampling}
    \label{alg:exp_mech_as_rejection_sampling}
    \end{algorithm}
\end{figure}

For different data distributions than those displayed by the six test datasets though, there can be an appreciable difference between \joint{} and \pnfjoint{}.  For example, consider a dataset consisting of $d = 4$ items with counts $c_1 = 10$, $c_2 = 5$, and $c_3 = c_4 = 1$.  Then, for $k = 2$ and $\epsilon = 1$, the probability of sampling the item sequence $[1, 2]$ is $\sim$$0.63$ under \joint{}, but $\sim$$0.75$ under \pnfjoint{}.  The same sort of difference can also be observed with larger values of $d$ and $k$.  For instance, consider a dataset consisting of $d = 1000$ items with counts $c_1 = 30$, $c_2 = 15$, and $c_3 = c_4 = \ldots = c_{1000} = 1$.  Then, for $k = 2$ and $\epsilon = 1$, the probability of sampling the item sequence $[1, 2]$ is $\sim$$0.34$ under \joint{}, but $\sim$$0.44$ under \pnfjoint{}.}

\end{document}